\def\ps@pprintTitle{%
 \let\@oddhead\@empty
 \let\@evenhead\@empty
 \def\@oddfoot{}%
 \let\@evenfoot\@oddfoot}
\begin{document}
\title{Approximating quasi-stationary behaviour in network-based SIS dynamics}
%
%


\author{Christopher E. Overton \and
Robert R. Wilkinson \and
Adedapo Loyinmi \and
Joel C. Miller \and
Kieran J. Sharkey
}


\institute{Christopher E. Overton \at
Department of Mathematics, University of Liverpool, UK \\
\emph{Present address:} Department of Mathematics, University of Manchester, UK \\
Clinical Data Science Unit, Manchester University NHS Foundation Trust \\
\email{c.overton@liverpool.ac.uk \\ christopher.overton@manchester.ac.uk} 
\and
Robert R. Wilinson \at
Department of Applied Mathematics, Liverpool John Moores University, UK
\and
Adedapo Loyinmi \at
Tai Solarin University of Education, Nigeria
\and
Joel C. Miller \at
Department of Mathematics and Statistics, La Trobe University, Australia
\and
Kieran J. Sharkey \at
Department of Mathematics, University of Liverpool, UK
}

\date{Received: date / Accepted: date}

\maketitle
\begin{abstract}
Deterministic approximations to stochastic Susceptible-Infectious-Susceptible models typically predict a stable endemic steady-state when above threshold. This can be hard to relate to the underlying stochastic dynamics, which has no endemic steady-state but can exhibit approximately stable behaviour. Here we relate the approximate models to the stochastic dynamics via the definition of the quasi-stationary distribution (QSD), which captures this approximately stable behaviour. We develop a system of ordinary differential equations that approximate the number of infected individuals in the QSD for arbitrary contact networks and parameter values. When the epidemic level is high, these QSD approximations coincide with the existing approximation methods. However, as we approach the epidemic threshold, the models deviate, with these models following the QSD and the existing methods approaching the all susceptible state. Through consistently approximating the QSD, the proposed methods provide a more robust link to the stochastic models.
\end{abstract}

\keywords{moment-closure \and graph \and epidemic model \and stochastic \and pair approximation}

\section{Introduction}
The Markovian network-based Susceptible-Infectious-Susceptible (SIS) model (also referred to as the contact process~\cite{Harris1974,Liggett1985}) is a stochastic model describing how pathogens spread on a host contact network~\cite{Boccalleti2006,Hadjichrysanthou2015,Kissetal2017,Parshani2010,Pastor2015,Pastor2001,Rock2014}.  In these dynamics, individuals can flip back and forth between two states: susceptible and infected. When an individual is infected, its neighbours in the network (or graph) that are susceptible are directly at risk of becoming infected. Infected individuals eventually return to the susceptible state and are again at risk. If all individuals are susceptible, they remain so for all future time and the pathogen is said to have died out. The all-susceptible state is thus an absorbing state. The model is sometimes considered to be a reasonable mathematical representation for the propagation of sexually transmitted diseases and computer viruses~\cite{Eames2002}.

Approximations to stochastic SIS models, such as mean-field models~\cite{Lajmanovich1976,Pastor2001,vanMieghem2011,vanMieghem2009,Wang2003} and pair-approximation models~\cite{Frasca2016,Hadjichrysanthou2015,Keeling1999,Keeling2005,Kissetal2017,Mata2013,Sharkey2011}, can characterise important features of the stochastic dynamics. One example is the epidemic threshold, below which the pathogen quickly goes extinct, and above which large outbreaks can occur. However, above threshold, these approximate models reach a stable endemic steady-state solution~\cite{Parshani2010} which is not observed in the stochastic dynamics. The stable endemic steady-state which emerges means it is not always clear how to relate these results back to the underlying stochastic process, since the only stable solution to the stochastic model is the disease-free state. 

Sufficiently above threshold, the stochastic system may exhibit apparently stationary behaviour, since the probability of extinction over any finite time period can be made very small. Indeed, this apparently stationary behaviour is often observed, with extinction just a theoretical certainty which almost never occurs over reasonable timescales. The quasi-stationary distribution (QSD) is commonly used to define, quantify and understand the long-term behaviour of finite Markov chains with absorbing states. Examples include: modelling the spread of a computer virus across a network with cure and reinfection~\cite{Kephart1993,Murray1988,Pastor2001,Wierman2004}, chemical reactions in which materials or catalysts can be exhausted~\cite{Dambrine1981note1,Dambrine1981note2,Oppenheim1977,Parsons1987,Pollett1988}, and wildlife management models~\cite{Holling1973,Klein1968,Mech1966,Pakes1987,Pollett1987,Pollett1995,Scheffer1951}. Within Markovian SIS dynamics, various statistics have been derived using the concept of the QSD~\cite{AnderssonBritton2000,Artalejoetal2010,Artalejoetal2013,Hagenaarsetal2004}. This includes use by Wilkinson and Sharkey~\cite{Wilkinson2013} to derive a measure of the invasion probability, by Ferreira and colleagues~\cite{Ferreiraetal2012} to approximate the epidemic threshold, and by N{\aa}sell~\cite{Nasell1999time} to account for the influence of epidemic and demographic forces on the time to extinction.

The calculation of the QSD can require a large number of stochastic simulations, and therefore it is necessary to derive approximation methods. Thus far, approximations have mainly focused on well-mixed populations. Kriscio and Lefevre~\cite{Kryscio2004} used a conditional birth-and-death process to approximate the QSD of Markovian SIS epidemic dynamics, which has since been extended by N{\aa}sell~\cite{Nasell1996,Nasell1999}. Allen and Burgin~\cite{Allen2000} used a system of ordinary differential equations that approximate the expected prevalence in the QSD when the epidemic severity is high. Dickman and Vidigal~\cite{Dickman2002} developed a pair approximation for the QSD on circles, which the model derived in this paper yields as a special case.

In arbitrary network-structured populations, van Mieghem~\cite{vanMieghem2011} has shown that the endemic steady-state of the ``$N$-intertwined" individual-based SIS model, which is akin to the network-based mean-field approximation, leads to a ``meta-stable state", which is consistent with the quasi-stationary distribution, when sufficiently above the epidemic threshold. This behaviour has also been observed for pair-based SIS approximations~\cite{Hadjichrysanthou2015,Kissetal2017}. These approximations to the stochastic dynamics are typically obtained by making statistical independence assumptions. 


Here our objective is to clarify the link between stochastic SIS dynamics and the approximate models by relating them via the QSD. Well above threshold, the stochastic model exhibits stationary-like behaviour and the conditioning of the QSD has minimal impact over short timescales, yielding a meaningful connection between the stochastic model and its approximations. Closer to the threshold and below threshold, the mapping according to the QSD becomes more important because the unconditioned approximate model with its steady state no longer approximates the stochastic process and its absorbing state. This leads to greater numerical correspondence between the models in this regime.

The paper is structured as follows. In Section~\ref{sec:SIS}, we define the Markovian network-based SIS modelling framework and the master equation that describes the expected behaviour, followed by defining the QSD in Section~\ref{sec:QSD}. Section~\ref{sec:approx} develops approximation methods that capture aspects of the QSD in a numerically feasible way and we prove the existence of endemic equilibrium solutions for the node-level pair-based SIS approximation (often referred to as the pair-quenched-mean-field approximation). We then analyse the performance of the proposed methods on different contact networks in Section~\ref{sec:res}. 

\section{Markovian \textbf{SIS} dynamics on a contact network}
\label{sec:SIS}
We consider a finite set $\mathcal{V}$ of individuals, who are labelled via an arbitrary bijection to $ \{1,2, \ldots , |\mathcal{V}|\}$. Let $N=|\mathcal{V}| < \infty$. Individuals can be in one of two states: susceptible, denoted by $S$, or infected, denoted by $I$. An individual $j \in \mathcal{V}$, while infected, makes infectious contacts to an individual $i \in \mathcal{V} \setminus \{ j \}$ according to a Poisson process with rate $T_{ij}\geq 0$. If a susceptible individual $k \in \mathcal{V}$ receives an infectious contact, it immediately becomes infected for an exponentially distributed time period with mean $1/\gamma_k$, after which it immediately becomes susceptible again. We define the neighbourhood of an individual $j$, denoted $\mathcal{N}_j$, as the set of individuals that can either infect or be infected by $j$; i.e. $i \in \mathcal{N}_j$ if $T_{ij}>0$ or $T_{ji}>0$. We assume that the transmission matrix $T$ is strongly connected; i.e. every individual is at risk of future infection if at least one individual is infected. The matrix $T$ can either represent a directed or undirected contact network.

This model is described by a continuous-time Markov chain $\{ \Sigma(t): t \ge 0 \}$ with finite state space $\{S,I\}^N$, parametrised by an irreducible square matrix $T$ with non-negative entries and a vector $\gamma$ with positive entries. Let $\sigma_\alpha \in \{S,I\}^N$ denote a state of the population.
We assume throughout that state $\sigma_1$ corresponds to the all susceptible state. Let $\Sigma_i(t)$ denote the status of individual $i$ at time $t$, and for a given state $\sigma_\alpha$, let $\sigma_{\alpha i}$ denote the status of individual $i$ in that state. 

From a given state $\sigma_\alpha$, the process can transition to a new state in which one individual has changed state from $S$ to $I$ or from $I$ to $S$. If the status of individual $i$ is changing, we denote the new state by $\sigma_\alpha^{i\to X}$, where $X \in \{S,I\}$ is the new status of $i$. The transition rates for the Markov chain are given in Table \ref{table1}, where $\delta$ is the Kronecker delta.
\begin{table}[h]
\begin{center}
\caption{Transitions for the Markovian network-based SIS model}
\smallskip
\smallskip
\begin{tabular}{cccc}
\hline\noalign{\smallskip}
from & to & at rate\\
\noalign{\smallskip}\hline\noalign{\smallskip}
$\sigma_\alpha:\sigma_{\alpha i}=S$ & $\sigma_\alpha^{i \to I}$ & $ \sum_{j \in \mathcal{V}} T_{ij} \delta_{I\sigma_{\alpha j}} $ \\
$\sigma_\alpha:\sigma_{\alpha i}=I$ & $\sigma_\alpha^{i \to S}$ & $\gamma_i $ \\ 
\noalign{\smallskip}\hline
\end{tabular}
\label{table1}
\end{center}
\end{table}

\noindent The time evolution of the Markov chain is captured by the master equation
\begin{equation}
\frac{\mathrm{d} P(t)}{\mathrm{d} t}= QP(t) \label{eqn:ME},
\end{equation}
where $P_{\alpha}(t)=P(\Sigma(t)=\sigma_\alpha)$ is the probability that the system is in state $\sigma_\alpha$ at time $t \ge 0$, and $Q$ is a matrix of transition rates (obtained from Table \ref{table1}). In particular, $P_1(t)$ denotes the probability that all individuals are susceptible at time $t$. Although this can be solved to determine the future behaviour, in many cases this is infeasible since the matrix $Q$ grows rapidly with $N$.

\section{The quasi-stationary distribution}
\label{sec:QSD}
Let us construct a vector $\rho(t)$, such that its components $\rho_\alpha(t)$, indexed by $\alpha$, represent the conditional probability that the system is in state $\sigma_\alpha$ at time $t$ given that at least one individual is infected; i.e. $\rho_\alpha(t)=P(\Sigma(t)=\sigma_\alpha|\Sigma(t)\neq\sigma_1)$, where $\sigma_1$ is the disease-free state. We have
\begin{equation}
\rho_\alpha(t)=\frac{P_{\alpha}(t)}{1-P_{1}(t)},
\label{eqn:CondDef}
\end{equation}
for $\alpha \neq 1$. For $\alpha=1$, we set $\rho_1(t)=0$ for all $t$. Here we have assumed that $P_{1}(t) \neq 1$ for all $t \ge 0$, which is satisfied whenever $P_{1}(0) \neq 1$, though as $t \to \infty$ the limit tends to 1, which can eventually make it numerically unstable to calculate the conditional probability this way. Using Equation~\eqref{eqn:CondDef} and the master equation \eqref{eqn:ME}, the time derivative of $\rho_{\alpha}(t)$ is given by
\begin{equation}
\frac{\mathrm{d} \rho_\alpha}{\mathrm{d} t}=
\begin{cases}
0 &\text{if } \alpha=1\\
\frac{(QP)_\alpha}{1-P_1}+\frac{P_\alpha(QP)_{1}}{(1-P_1)^2} &\text{if } \alpha =2,3, \ldots , 2^N,
\end{cases}
\label{eqn:CondME}
\end{equation}
where we suppress the explicit time dependence of $P$ and $\rho$ in favour of compactness. The right-hand side can be expressed in terms of $\rho$ by using Equation~\ref{eqn:CondDef}. However, we opt to keep this in terms of $P$ since this form is used when developing the approximate models.

The state space for the Markov chain is finite and consists exhaustively of one absorbing state and a communicating class of transient states. The non-absorbing states form a communicating class of transient states because the contact network is strongly connected and the vector ${\gamma}$ of recovery rates is positive. Thus, there exists a unique quasi-stationary distribution (QSD)~\cite{Darroch1967}, independent of initial conditions, which is equivalent to the limiting conditional distribution. This QSD, denoted by $\rho^*$, is a stationary distribution of the conditional probability and an equilibrium of Equation~\eqref{eqn:CondME}. Since $\rho^*$ is unique, if we find some distribution $P^*$ over all $2^N$ system states which satisfies
\begin{equation}
\frac{(QP^*)_\alpha}{1-P^*_1}+\frac{P^*_\alpha(QP^*)_{1}}{(1-P^*_1)^2}=0 \qquad \alpha =2,3, \ldots , 2^N,
\label{eqn:QSD}
\end{equation}
then $\rho^*$ is given by
\begin{equation}
\rho^*_\alpha=
\begin{cases}
0 &\text{if } \alpha=1\\
\frac{P_\alpha^*}{1-P^*_1}, &\text{if } \alpha =2,3, \ldots , 2^N.
\end{cases}
\label{eqn:QSDdef}
\end{equation}
Here, for convenience, we define the QSD such that it assigns probability zero to the absorbing state, as opposed to leaving it undefined. Finding the QSD directly is in many cases infeasible since the size of the state space grows geometrically with the population size.

To go from the system-level master equation to node-level equations, we sum Equation~\eqref{eqn:CondME} over all states in which individual $i \in \mathcal{V}$ is infected. Through this (see Appendix~\ref{App:nodelevel}), we arrive at an expression for the rate of change of the probability that $i$ is infected conditioned on non-extinction
\begin{equation}
\frac{\mathrm{d}}{\mathrm{d}t}\left(\rho(\Sigma_i(t)=I)\right)= \frac{\sum_{j}{T_{ij}} \langle S_iI_j \rangle-\gamma_i \langle I_i \rangle}{1- P_1} + \frac{ \langle I_i \rangle }{(1-P_1)^2}\sum\limits_{j}\gamma_j\langle I_j S \rangle,
\label{eqn:IndivCond}
\end{equation}
where $\langle S_iI_j\rangle$ is shorthand for $P(\Sigma_i(t)=S,\Sigma_j(t)=I)$, $\langle I_i \rangle$ is shorthand for $P(\Sigma_i(t)=I)$ and $\langle I_j S \rangle$ is shorthand for the probability that node $j$ is infected and all other nodes are susceptible. As above, $P_1$ is the probability that all nodes are susceptible. Finding a steady-state solution such that

\begin{equation}
0 = \frac{\sum_{j}{T_{ij}} \langle S_iI_j \rangle-\gamma_i \langle I_i \rangle}{1- P_1} + \frac{ \langle I_i \rangle }{(1-P_1)^2}\sum\limits_{j}\gamma_j\langle I_j S \rangle,
\label{eqn:test}
\end{equation}
the probability that node $i$ is infected in the QSD can be calculated as
\begin{equation}
\langle I_i\rangle^{\rm QSD} = \frac{\langle I^*_i \rangle}{(1-P^*_1)}.
\label{eqn:qsd_adj}
\end{equation}

To find an exact solution to Equation~\eqref{eqn:test} requires constructing a hierarchy describing how different states, ranging from pairs up to full system size, change in time, which is computationally no more efficient than solving Equation~\eqref{eqn:QSDdef} directly. However, in this form we can employ moment-closure techniques to approximate these higher order terms. Such approaches are commonly used for approximating the standard probability distribution for epidemic models~\cite{Frasca2016,Hadjichrysanthou2015,Keeling1999,Keeling2005,Kissetal2017,Mata2013,Sharkey2011}. One approach is to assume statistical independence at the level of indidividuals in Equation~\eqref{eqn:IndivCond}. Alternatively, we can construct exact equations describing how the pair probability $\rho(\Sigma_i(t)=S,\Sigma_j(t)=I)$ changes in time, which we can approximate by assuming statistical independence at the level of pairs. 

Although on the left-hand side of Equation~\eqref{eqn:IndivCond} we define the conditional distribution, we retain the standard distributions on the right-hand side. It is possible to express the right-hand side in terms of conditional probabilities. However, through keeping the standard distributions, the approximations obtained in the later sections were found to be more reliable (not shown). By finding approximations that would cause the right-hand side to be zero, we can then transform these into approximations to the conditional distribution by using Equation~\eqref{eqn:qsd_adj}, where both $\langle I_i^*\rangle$ and $P_1^*$ will also need be approximated.

\section{Approximating the QSD}
\label{sec:approx}
In this section, we use moment closure methods to approximate the solution to Equation~\eqref{eqn:IndivCond}. The first approach is to assume that the states of neighbouring individuals are statistically independent, resulting in a relatively simple model that scales computationally with the number of nodes in the network. We then remove this assumption, and instead assume statistical independence at the level of pairs. This results in a more complex model that scales computationally with the number of pairs of nodes, but should capture the correlations between neighbouring nodes.

\subsection{Individual-based approach}
\label{sec:Indiv}
\noindent Approximating Equation~\eqref{eqn:test} by assuming that the states of individuals are independent gives
\begin{align}
0\approx \frac{\sum_{j}{T_{ij}} \langle S_i \rangle \langle I_j \rangle-\gamma_i \langle I_i \rangle}{1- \prod_{k} \langle S_k \rangle} + \frac{ \langle I_i \rangle }{(1-\prod_{k} \langle S_k \rangle)^2}\sum\limits_{j}\gamma_j\langle I_j \rangle \prod_{k \neq j} \langle S_k \rangle. 
\label{eqn:Rob2}
\end{align}
To find the approximation to the probability that node $i$ is infected in the QSD ($\langle I_i\rangle^{\rm QSD}$) under this independence assumption, we need to find vectors $\langle X \rangle^*$ and $\langle Y \rangle^*$ satisfying,
\begin{equation}
0= \frac{\sum_{j}{T_{ij}} \langle X_i \rangle^* \langle Y_j \rangle^*-\gamma_i \langle Y_i \rangle^*}{1- \prod_{k} \langle X_k \rangle^*} + \frac{ \langle Y_i \rangle^* }{(1-\prod_{k} \langle X_k \rangle^*)^2}\sum\limits_{j}\gamma_j\langle Y_j \rangle^* \prod_{k \neq j} \langle X_k \rangle^*,
\label{eqn:Rob3}
\end{equation}
for all $i$. In the exact case, we need to scale the steady-state by the density remaining in the transient states (Equation~\eqref{eqn:qsd_adj}) to obtain the QSD probability. Following a similar procedure, from $\langle X \rangle^*$ and $\langle Y \rangle^*$, the probability that $i$ is infected in the QSD is approximated by computing
\begin{equation}\langle I_i \rangle_{ \rm approx}^{\rm QSD}=\frac{\langle Y_i \rangle^*}{1-\prod_{k} \langle X_k \rangle^*}.\label{eqn:adjustment}\end{equation}
However, we are only interested in solutions of Equation~\eqref{eqn:Rob3} that are feasible; i.e. $\langle Y_i \rangle^* \in [0,1], \langle X_i \rangle^*=1-\langle Y_i \rangle^*$ for all $i$. To obtain such a solution, define
\begin{equation} 
\frac{\mathrm{d}\langle{Y_i}\rangle}{\mathrm{d}t}= \sum_{j}{T_{ij}}\langle X_i \rangle \langle Y_j \rangle-\gamma_i \langle Y_i\rangle + \frac{\langle Y_i \rangle \sum\limits_{j}\gamma_j \langle Y_j \rangle \prod\limits_{k \neq j} \langle X_k \rangle}{ 1-\prod\limits_{k} \langle X_k \rangle }  \quad , \quad \langle X_i \rangle=1- \langle Y_i \rangle.
\label{eqn:Rob4}
\end{equation}
Equation~\eqref{eqn:Rob4} is positively invariant in $[0,1]^N$ (see Appendix~\ref{App:invariant}). This gives a system of $N$ coupled equations, which can be numerically integrated until a steady state is reached. Alternatively, other fixed point analysis approaches can be applied. Starting from initial conditions satisfying $\langle X_i \rangle \in [0,1]$ and $\langle Y_i \rangle \in [0,1]$ at $t=0$ for all $i$, this process will give a feasible solution to Equation~\eqref{eqn:Rob3}. From Equation~\eqref{eqn:adjustment}, we can approximate the expected number infected in the QSD as 
\[[I]_{\rm approx}^{\rm QSD}= \sum_{i}\langle I_i \rangle_{\rm approx}^{\rm QSD}.\]
We refer to this as the \textit{node-level individual-based model}.

\begin{theorem}
For a (strongly connected) K-regular graph, with homogeneous transmission and recovery rates, the node-level individual-based model yields a feasible approximation of the expected prevalence in the QSD such that $\langle I_i \rangle_{ \rm approx}^{\rm QSD} \in (0,1)$, this being the same for all $i \in \mathcal{V}$, and $[I]_{\rm approx}^{\rm QSD} \in (1,N)$. On any strongly connected network, provided a solution exists such that $\langle I_i \rangle_{ \rm approx}^{\rm QSD} \in (0,1)$ (which is found to hold numerically in all instances investigated), then $[I]_{\rm approx}^{\rm QSD} \in (1,N)$. Therefore, the number of infected individuals in the QSD is lower bounded by 1, a property which is shared by the true QSD. 
\end{theorem}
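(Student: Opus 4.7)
The proof splits into two independent parts, corresponding to the two claims in the theorem. For the $K$-regular graph with homogeneous rates (call the common transmission rate $\tau$ and recovery rate $\gamma$), the natural approach is to seek a symmetric fixed point of the dynamical system (12), namely $\langle Y_i\rangle^*=y^*$ (equivalently $\langle X_i\rangle^*=1-y^*$) for all $i$. Since every row sum $\sum_j T_{ij}$ equals $K\tau$ under the $K$-regular homogeneous hypothesis, substituting this ansatz into the steady-state equation (11), factoring out $y^*$, and dividing by $1-\prod_k\langle X_k\rangle^*$ reduces the problem to finding a root in $(0,1)$ of the scalar function
\[
 h(y) \;=\; K\tau(1-y) - \gamma + \frac{N\gamma\, y\,(1-y)^{N-1}}{1-(1-y)^N}.
\]

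I would then establish existence of a root $y^*\in(0,1)$ by the intermediate value theorem. Using the expansion $1-(1-y)^N = Ny - \binom{N}{2}y^2 + O(y^3)$ together with $(1-y)^{N-1}\to 1$ shows that the third term of $h(y)$ tends to $\gamma$ as $y\to 0^+$, so $h(0^+)=K\tau>0$: the recovery term is exactly cancelled by the conditioning-on-non-extinction term, independently of the epidemic threshold. At $y=1$, the first and third terms vanish, leaving $h(1)=-\gamma<0$. Continuity then produces $y^*\in(0,1)$. Feasibility $\langle I_i\rangle_{\rm approx}^{\rm QSD} = y^*/(1-(1-y^*)^N)\in(0,1)$ follows from $(1-y^*)^N < 1-y^*$, and the strict Bernoulli inequality $(1-y^*)^N>1-Ny^*$ (valid for $N\ge 2$ and $y^*\in(0,1)$) rearranges to $[I]_{\rm approx}^{\rm QSD} = Ny^*/(1-(1-y^*)^N) > 1$; the upper bound $[I]_{\rm approx}^{\rm QSD}<N$ follows by summing the per-node bound $\langle I_i\rangle_{\rm approx}^{\rm QSD}<1$.

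For the general strongly connected case, the hypothesis supplies a solution with $\langle I_i\rangle_{\rm approx}^{\rm QSD}\in(0,1)$ for every $i$. Setting $D:=1-\prod_k\langle X_k\rangle^*\in(0,1)$, we have $[I]_{\rm approx}^{\rm QSD} = \bigl(\sum_i\langle Y_i\rangle^*\bigr)/D$. The upper bound $[I]<N$ is immediate from summing the per-node bound. The lower bound $[I]>1$ is equivalent to
\[
 \sum_{i}\langle Y_i\rangle^* \;>\; 1 - \prod_{k}\bigl(1-\langle Y_k\rangle^*\bigr),
\]
which I would prove by a union-bound identity: for independent events $A_i$ with $P(A_i)=\langle Y_i\rangle^*$, one has $P(\cup_i A_i) = 1-\prod_k(1-\langle Y_k\rangle^*) \le \sum_i P(A_i)$, strictly so whenever at least two of the $\langle Y_i\rangle^*$ are positive. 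Feasibility of $\langle I_i\rangle_{\rm approx}^{\rm QSD}\in(0,1)$ forces $\langle Y_i\rangle^*>0$ for every $i$, so for $N\ge 2$ strict inequality holds.

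The main obstacle is the behaviour of (11)--(12) near the disease-free state $\{\langle Y_i\rangle=0\}$: both the numerator $\sum_j\gamma_j\langle Y_j\rangle\prod_{k\ne j}\langle X_k\rangle$ and the denominator $1-\prod_k\langle X_k\rangle$ of the conditioning term vanish simultaneously, so the limit must be taken with care. The Taylor expansion makes this precise and is where the delicate cancellation that forces the scalar fixed point strictly inside $(0,1)$ actually occurs; once the singular boundary behaviour is controlled, the remaining estimates are routine applications of Bernoulli and the union bound.
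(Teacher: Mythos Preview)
Your proposal is correct and follows essentially the same route as the paper: the symmetric ansatz $\langle Y_i\rangle^*=y^*$ on the $K$-regular graph reduced to a scalar equation whose sign change on $(0,1)$ is established via the same limiting argument (your $h$ is a positive multiple of the paper's $f$), and the lower bound $[I]^{\rm QSD}_{\rm approx}\ge 1$ via the inequality $\sum_i\langle Y_i^*\rangle\ge 1-\prod_k(1-\langle Y_k^*\rangle)$, which the paper phrases as $\mathbb{E}[S]\ge P(S\ge 1)$ for a sum of independent Bernoullis and you phrase equivalently as the union bound. Your write-up is in fact slightly more complete, since you also verify explicitly that the per-node quantity $y^*/(1-(1-y^*)^N)$ lies in $(0,1)$ and you note when the lower bound is strict, points the paper leaves implicit.
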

\begin{proof}
Appendix~\ref{App:existence} 
\end{proof}

As a further approximation to the expected number of infected individuals in the QSD, we can treat all individuals of a given degree equally by assuming
\begin{equation}
\langle I_i \rangle \approx \frac{[I_{k_i}]}{|C_{k_i}|}, \nonumber
\end{equation}
where $k_i$ is the degree of node $i$, $[I_{k_i}]$ is the expected number of infected individuals with degree $k_i$, and $|C_k|$ is the number of degree $k$ nodes. To make this approximation, we must also assume that the contact rate and recovery rate only depend on the degree of the individuals, i.e. $T_{ij}=T_{k_i k_j}$ (whenever $T_{ij} >0$)  and $\gamma_i= \gamma_{k_i}$. After summing over all $i$ of a given degree, assuming statistical independence at the level of individuals, and setting the left hand side to zero, Equation~\eqref{eqn:Rob4} becomes a system of $M$ equations in as many variables, where $M$ is the number of unique node degrees in the network (see Appendix~\ref{App:IndivPop}). We refer to the resulting model as the \textit{population-level individual-based model}. In the special case of a circle network, this population-level model yields a model developed by~\cite{Dickman2002}.


%
\subsection{Pair-based approach}
\label{sec:Pair}
\noindent Assuming independence at the level of individuals may be unrealistic, since infection spreads through contact between individuals. Here, we keep Equation~\eqref{eqn:IndivCond} without approximation, and also sum Equation~\eqref{eqn:CondME} over all states in which individual $i \in \mathcal{V}$ is susceptible and individual $j \in \mathcal{N}_i$ is infected, so that we arrive at the equation for the rate of change of the probability that $i$ is susceptible and $j$ is infected conditioned on non-extinction (following a similar derivation to Equation~\eqref{eqn:IndivCond}):
\begin{equation}
\frac{\mathrm{d}}{\mathrm{d}t}\left(\rho(\Sigma_i(t)=S,\Sigma_j(t)=I)\right)= \frac{\dot{\langle S_i I_j \rangle}}{1- P_1} + \frac{ \langle S_i I_j \rangle }{(1-P_1)^2}\sum\limits_{j}\gamma_j\langle I_j S \rangle \qquad (i \in \mathcal{V}, j \in \mathcal{N}_i),
\label{eqn:IndivCond3}
\end{equation}
where $\dot{\langle S_i I_j \rangle}$ is the rate of change in the probability that $i$ is susceptible and $j$ is infected under the standard distribution. The rate $\dot{\langle S_i I_j \rangle}$ depends on triple-probabilities (see Appendix~\ref{sec:Standard}), which can be approximated in terms of individual-probabilities and pair-probabilities using

\begin{equation}\langle A_i B_j C_k \rangle= {\langle A_iC_k|B_j \rangle}{\langle B_j \rangle} \approx  \frac{ \langle A_i B_j \rangle \langle B_j C_k \rangle}{\langle B_j \rangle},
	\label{pairapprox}
	\end{equation}
which assumes that the states of nodes $i$ and $k$ are independent when given the state of node $j$.
	Guided by this approximation for triple-probabilities, and following~\cite{Frasca2016,Sharkey2015}, we then approximate
	\begin{equation}
	\langle S \rangle \approx \frac{ \prod_{i,j \in \mathcal{N}_i: j < i} \langle S_i S_j \rangle }{\prod_i \langle S_i \rangle^{n_i-1}} \quad , \quad
	\langle I_j S \rangle \approx \frac{\prod\limits_{x \in \mathcal{N}_j}\langle I_jS_x\rangle\prod\limits_{y \neq j}\prod\limits_{x \in \mathcal{N}_y:x<y,x\neq j}\langle S_yS_x \rangle }{\prod\limits_{x \neq j}\langle S_x\rangle^{k_x-1}\langle Y_j \rangle^{k_j-1}}.
	\label{pairapprox2}
	\end{equation}
	Setting the left hand sides of equations~\eqref{eqn:IndivCond} and \eqref{eqn:IndivCond3} to zero, applying the above approximations, and imposing
	\begin{align*}
	\langle S_i \rangle&= 1-\langle I_i \rangle, \\
	\langle S_i S_j \rangle&=\langle S_i \rangle-\langle S_i I_j \rangle,\\
	\langle I_i I_j \rangle&=\langle I_i \rangle-\langle I_i S_j \rangle,
	\end{align*}
	yields a system of $\displaystyle N+ \sum_i k_i$ equations in as many variables (see Appendix~\ref{App:pair}). We refer to this as the \textit{node-level pair-based model}.

Again, as a further approximation we can treat all individuals of a given degree equally, and all pairs of neighbours of given degrees equally, by assuming
\begin{equation} 
\langle I_i \rangle\approx  \frac{[I_{k_i}]}{|C_{k_i}|} \quad , \quad \langle S_i I_j \rangle \approx \frac{[S_{k_i}I_{k_j}]}{|C_{k_i k_j}|}  \quad , \quad T_{ij}=T_{k_i k_j} \quad, \quad \gamma_i= \gamma_{k_i} \qquad (i \in \mathcal{V} , j \in \mathcal{N}_i), \nonumber
\end{equation} 
where $|C_{k,l}|$ is the number of pairs between a degree $k$ node and a degree $l$ node, and $[S_{k_i}I_{k_j}]$ is the expected number of pairs involves a susceptible degree $k_i$ node and an infected degree $k_j$ node. After applying approximations \eqref{pairapprox} and \eqref{pairapprox2}, summing over all $i$ of a given degree and over all pairs $i$ and $j$ of given degrees, and setting the left hand sides to zero, equations~\eqref{eqn:IndivCond} and \eqref{eqn:IndivCond3} become a system of $M+M^2$ equations (see Appendix~\ref{App:population}). We refer to this as the \textit{population-level pair-based model}.

\section{Numerical results}
\label{sec:res}
Here we determine how the methods developed in this paper perform when used to approximate the expected number of infected individuals in the QSD for various networks and parameter values.

We assume that: the transmission rate for any pair of connected individuals is equal (taking $T_{ij}=\tau$ whenever $T_{ij}>0$ and zero otherwise), the network is undirected, and infected individuals recover at the same rate; i.e. $\gamma_i=\gamma$ for all $i \in \mathcal{V}$. In the case of an evenly-mixed population, represented by a complete network, the epidemic threshold of the standard individual-based model is given by $(N-1) \times \tau/\gamma =1$. We therefore choose to plot the expected number of infected individuals against $\tau\times\bar{d}/\gamma$, where $\bar{d}$ is the average degree of the graph, to ensure that all networks are tested over a comparable range of epidemic severity. This is a rough approximation for epidemic severity, since in reality it depends on the degree distribution and correlations rather than just the average degree~\cite{Keeling1999}. We assume $\gamma=1$ throughout, so that the ratio can be changed by changing $\tau$. 

The standard individual-based models (Appendix~\ref{sec:Standard}) have been proven to possess a non-zero steady-state solution in the region of parameter space where the epidemic severity is large~\cite{Kissetal2017,vanMieghem2011}. For the standard node-level pair-based model (Appendix~\ref{sec:Standard}), we prove existence of a non-zero steady-state solution in Appendix~\ref{App:standard}. When the transmission rate is sufficiently large, we observe that the models proposed in this paper converge to the standard models, so these endemic steady-states approximate the expected number of infected individuals in the QSD. To demonstrate this, the dynamics for these standard models are compared to the QSD approximation methods (Section~\ref{sec:approx}). We are particularly interested in how our methods perform for low values of $\tau\times\bar{d}/\gamma$, where the standard models will not capture the QSD.

As a baseline for comparison, we simulate the stochastic SIS model using the Gillespie algorithm. To calculate the expected prevalence in the QSD, we average over all simulations that have not gone extinct. 100,000 simulations are run until $t=300$, since by this point all cases reached a steady-state. We compare the expected number of infected individuals in this solution with the steady-state of the QSD approximation methods and the standard models, solved using the Runge-Kutta method. For both the stochastic simulations and the approximation methods, the population is initiated with every node infected. This is to improve accuracy of the stochastic simulations, since a higher proportion will attain the QSD. 

\subsection{Impact of network structure}
To test the methods, consider three networks: the complete network, the (NxN) square-lattice (with fixed boundaries), and Zachary's karate club~\cite{Zachary1977}. The complete network represents a well-mixed population, in which all individuals are connected to each other. The square-lattice is a commonly used network when adding structure to population dynamics. We consider the variant with fixed boundaries, so the interior nodes have four neighbours, edge nodes have three neighbours and corner nodes have two neighbours. Although there is a lot of symmetry across the network, the regular structure with multiple loops can prove challenging for moment closure approximation methods. Zachary's karate club is an example of a real world network, formed from interactions between members of a karate club.

Figure~\ref{fig:Indiv}(a) compares the node-level individual-based model (Equation~\eqref{eqn:Rob4}) with stochastic simulations. Below the epidemic threshold (where the standard model switches from zero to an endemic steady-state), the QSD method captures the behaviour reasonably accurately. As $\tau\times\bar{d}/\gamma$ increases, the approximation diverges, with differing levels of performance on each of the graphs tested. This individual-based method performs best on the complete network, on which it provides a good approximation to the expected number of infected individuals for all parameter values. Some level of accuracy is also observed on Zachary's Karate club. However, on the square-lattice this method does not perform well when above the epidemic threshold, significantly overestimating the expected number of infected individuals in the QSD. This is because the structure of the lattice results in significant local correlations which makes the assumption of statistical independence of individual nodes unrealistic. 

Using the population-level individual-based model (see Section~\ref{sec:Indiv} and Appendix~\ref{App:IndivPop}), little accuracy is lost (Figure~\ref{fig:Indiv}(b)). The same pattern of performance occurs across the three networks, and by overlaying the results, the population-level model is almost indistinguishable from the node-level model on the resolution of the graph. This suggests that the QSD approximation is mainly determined by the degree distribution, though there is likely to be some minor variations for graphs with the same distribution but differing in other network properties.

Since the assumption of individual-level statistical independence can lack accuracy, we developed a node-level pair-based model for the QSD (see Section~\ref{sec:Pair} and Appendix~\ref{App:pair}). Figure~\ref{fig:Pair}(a) shows the accuracy of this approximation, which is significantly improved over the individual-based models on all networks. On the complete network and Zachary's karate club, this approximation is very accurate, and on the lattice it loses some accuracy but significantly outperforms the individual-based approximation. The loss of accuracy on the lattice is expected, since pair-approximation methods are generally considered to perform weakly on such structures.

Although the pair-based model is computationally feasible, for large graphs it can be slow. Therefore, we derived a population-level pair-based model  (see Section~\ref{sec:Pair} and Appendix~\ref{App:population}). Again, little accuracy is lost for all networks (Figure~\ref{fig:Pair}(b)), with the result being indistinguishable from the node-level model.

For each of the methods proposed, a stationary solution is reached for all parameter values on all networks. These solutions appear to be unique and lower bounded by 1. Therefore, the proposed methods satisfy the basic properties of the QSD. Sufficiently above the epidemic threshold, our models and the standard (unconditioned) models coincide (Figures~\ref{fig:Indiv} and~\ref{fig:Pair}), showing that the standard models approximate the expected number of infected individuals in the QSD in this region. However, as the transmission rate decreases, the steady states of the standard models deviate from this, eventually tending to the disease-free steady-state. Therefore, the standard models are not a reliable measure of the expected prevalence in the QSD since they do not capture this for all parameter values, and the endemic steady-state in the intermediate range (between the disease-free steady-state and coinciding with the QSD model) is hard to relate to any properties of the underlying stochastic process. The models we propose are more robust for providing insight into the stochastic epidemic model.
\begin{figure}[h]
\hspace{-1.8cm}
\includegraphics[width=1.2\textwidth]{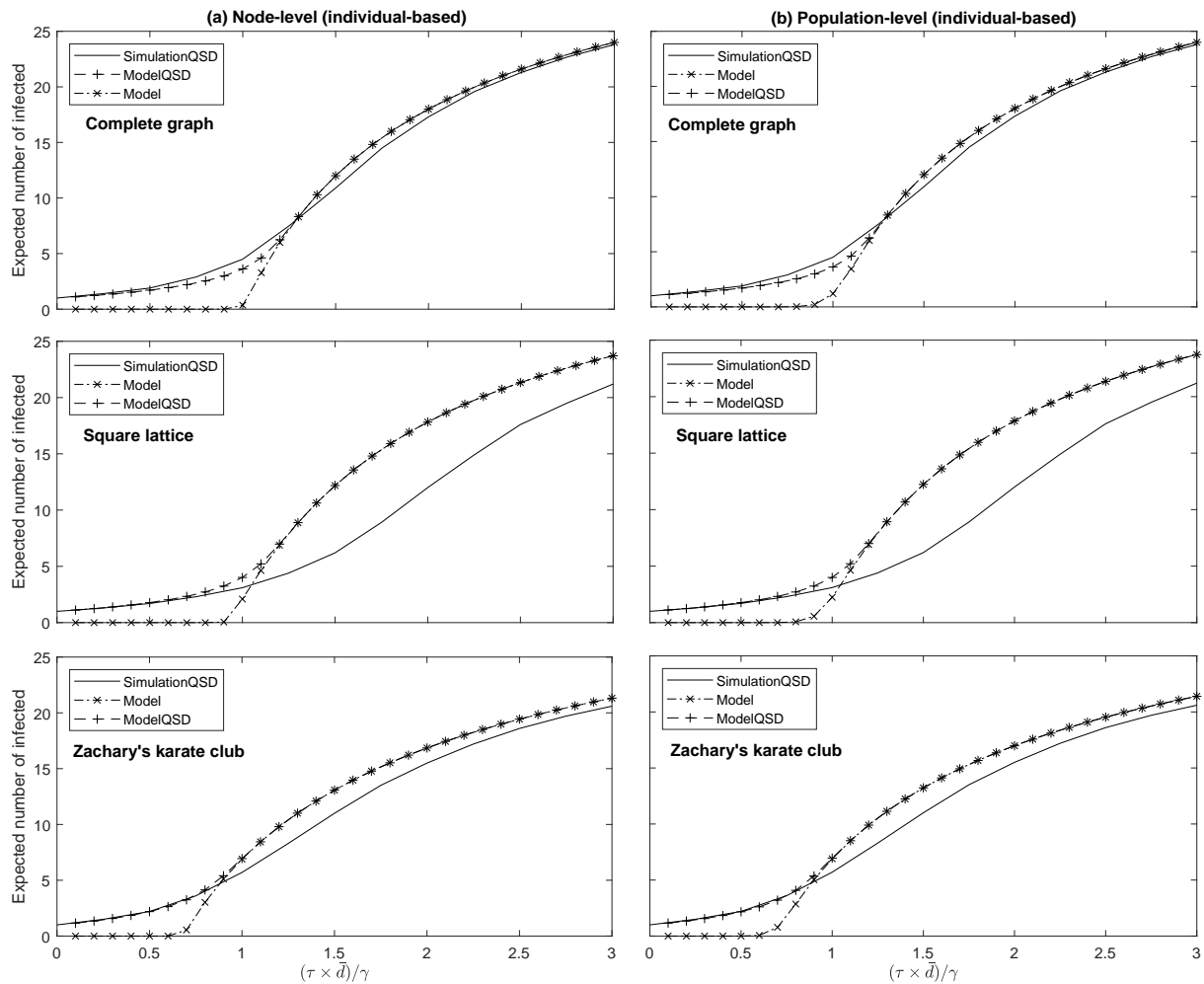}
\caption{The expected number of infected individuals in the QSD as calculated by the individual-based model versus stochastic simulation on a 36 node complete network, 36 node (6x6) square-lattice and the 34 node karate club network, for a range of parameter values. The left plot shows the node-level methods (Equation~\eqref{eqn:Rob4}) and the right shows the population-level methods (Equation~\eqref{eqn:PopIndiv5}). The solid lines represent the average of 10,000 stochastic simulations conditioned against extinction, the dashed line (plusses) represents the proposed QSD approximation method and the dash-dotted line (crosses) represent the standard unconditioned model. The simulated QSD is accurate to within the resolution of the line.}
\label{fig:Indiv}
\end{figure}

\begin{figure}[h]
\hspace{-1.8cm}
\includegraphics[width=1.2\textwidth]{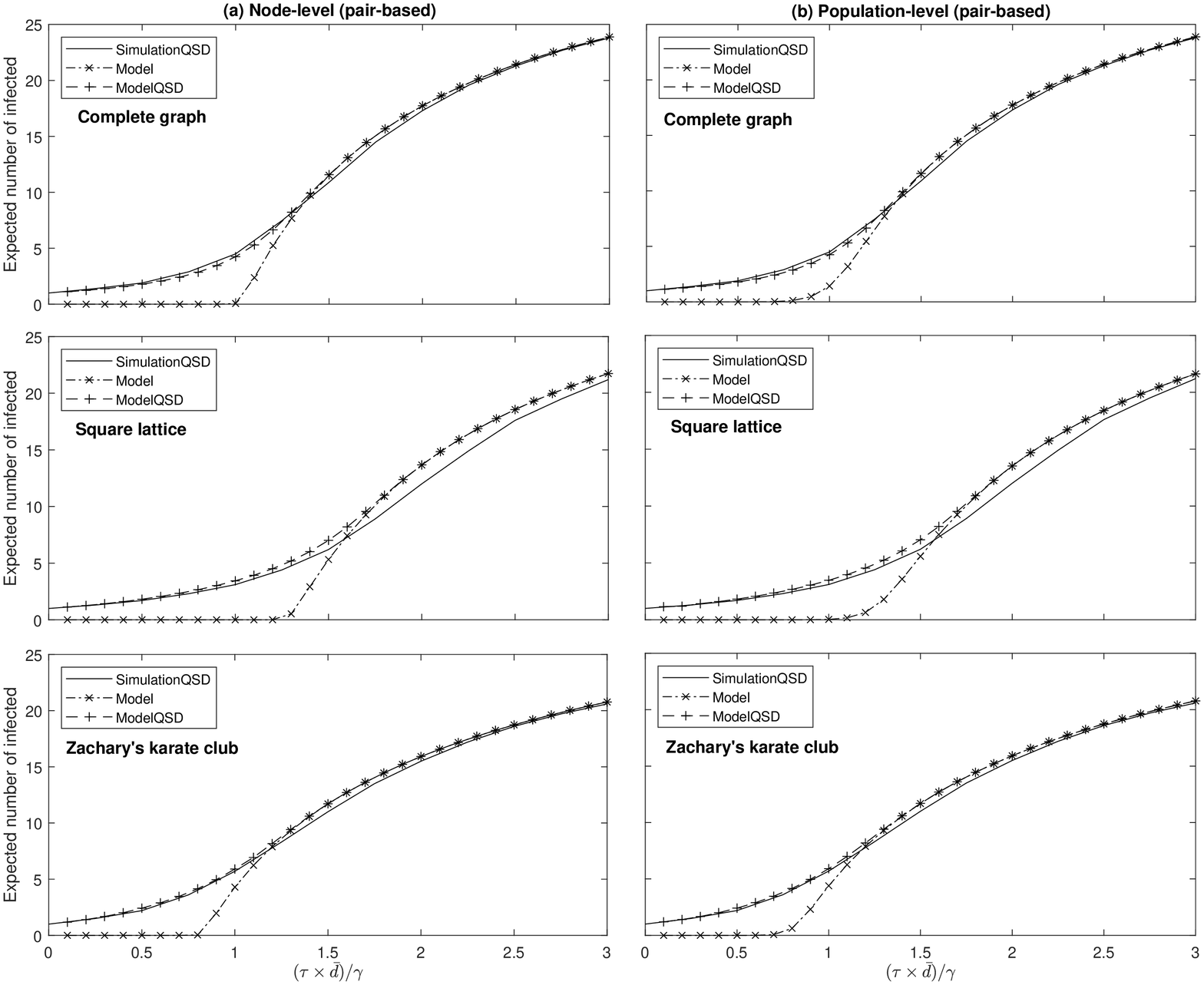}
\caption{The expected number of infected individuals in the QSD as calculated by the pair-based model versus stochastic simulation on a 36 node complete network, 36 node (6x6) square-lattice and the 34 node karate club network, for a range of parameter values. The left plot shows the node-level methods (Equation~\eqref{eqn:PairNode5}) and the right shows the population-level methods (Equation~\eqref{eqn:PairPop5}). The solid lines represent the average of 10,000 stochastic simulations conditioned against extinction, the dashed line (plusses) represents the proposed QSD approximation method and the dash-dotted line (crosses) represent the standard unconditioned model. The simulated QSD is accurate to within the resolution of the line.}
\label{fig:Pair}
\end{figure}
%

\subsection{Impact of network size}
We now investigate how increasing the size of the population affects the accuracy of the results, testing a 100 node (10x10) lattice, 225 node (15x15) lattice and 400 node (20x20) lattice. Here the square-lattice is chosen because this presented itself as the worst case, with other networks expected to perform better. The lattice is expected to perform badly because the strict structure leads to very high local correlations, which may not be captured by the moment-closure approximations. 

Since the population-level models perform similarly to the node-level models at capturing the expected number of infected individuals, with significantly reduced computational cost, in this section we only use these models to approximate the dynamics. Comparing the QSD method to the simulation results~(Figure~\ref{fig:Size}), we see good agreement for low transmission parameters for both individual-based and pair-based methods. However, once the individual-based methods pass the epidemic threshold, where the standard method reaches a non-zero steady state, both the standard method and the QSD method diverge significantly from the simulation results, overestimating the true expected number of infected in the QSD, echoing what we observed in Figure~\ref{fig:Indiv}. For the pair-based models, once the parameters exceed the epidemic threshold, we still see some deviation from the simulation results for both the standard and QSD methods. However, this is much smaller than for the individual-based methods. For higher relative transmission rates, the model solutions provide a reasonable approximation to the expected number of infected individuals in the QSD. For the three lattice sizes considered, in the regions below and sufficiently above the epidemic threshold, the relative magnitude of the discrepancy between the approximations and simulation results does not change with population size, for both individual-based and pair-based models. However, in the intermediate region there is some sensitivity to population size. Below the epidemic threshold, the standard models do not capture the dynamics of the QSD, regardless of population size, whereas the QSD approximation models are accurate.

\begin{figure}[h]
\hspace{-1.8cm}
\includegraphics[width=1.2\textwidth]{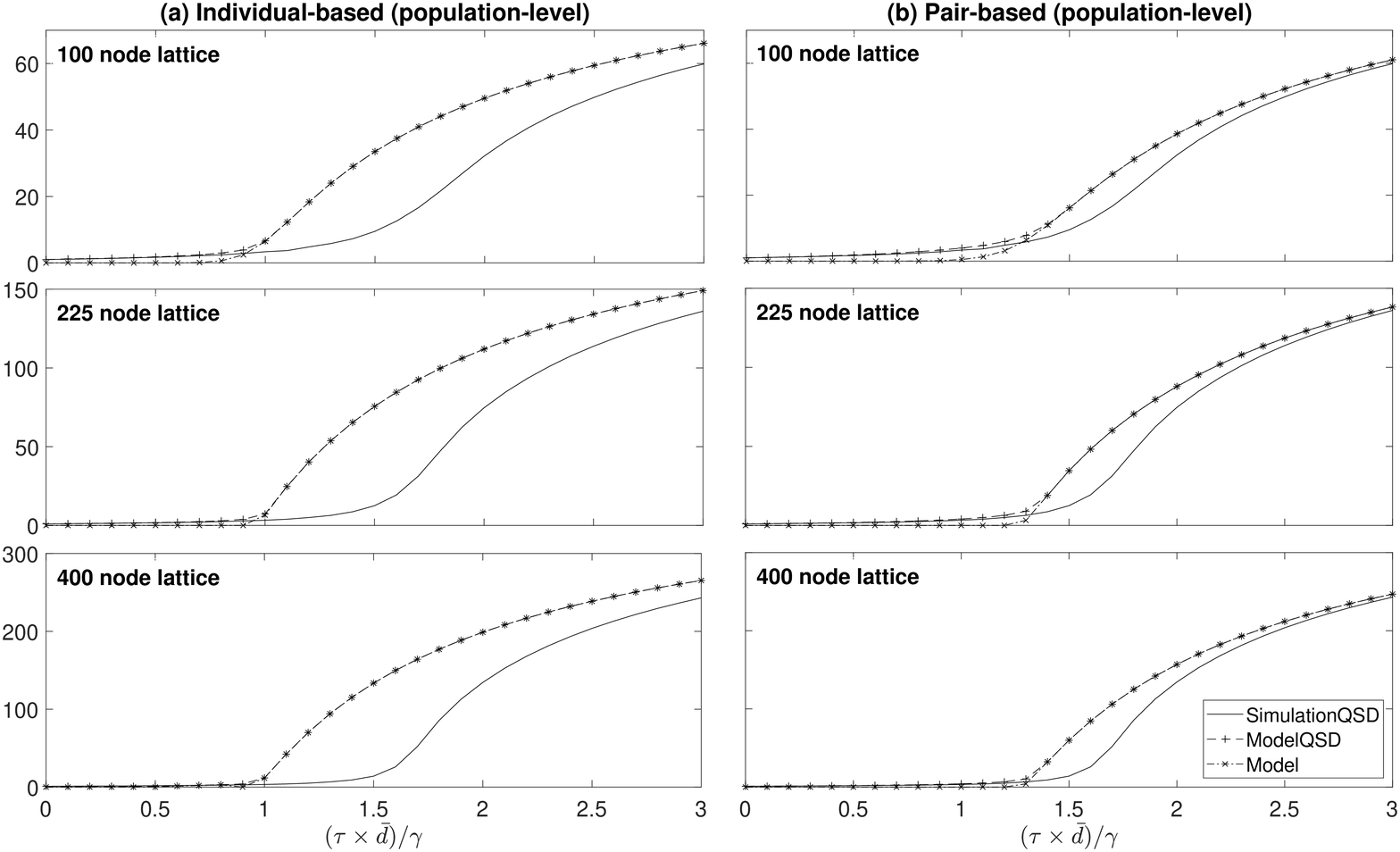}
\caption{The expected number of infected individuals in the QSD as calculated by the population-level models versus stochastic simulation on a 10x10 node square-lattice, 15x15 node square-lattice, and 20x20 node square-lattice for a range of parameter values. The left plot shows the individual-based methods (Equation~\eqref{eqn:PopIndiv5}) and the right shows the pair-based methods (Equation~\eqref{eqn:PairPop5}). The solid lines represent the average of 100,000 stochastic simulations conditioned against extinction, the dashed lines (plusses) represent the proposed QSD approximation method and the dash-dotted lines (crosses) represent the standard unconditioned model. The simulated QSD is accurate to within the resolution of the line.}
\label{fig:Size}
\end{figure}

\section{Discussion}
The standard deterministic SIS model~\cite{Lajmanovich1976,vanMieghem2011,vanMieghem2009,Wang2003} exhibits an epidemic threshold below which the pathogen will go extinct and above which the pathogen will reach an endemic steady-state solution~\cite{Lajmanovich1976,vanMieghem2011}. More complicated `deterministic' models have been developed, such as pair-approximations models~\cite{Frasca2016,Hadjichrysanthou2015,Keeling1999,Keeling2005,Kissetal2017,Mata2013,Sharkey2011}, in which this threshold behaviour is also observed~\cite{Keeling1999,Mata2013}. However, no steady-state solution exists in the stochastic SIS model, making it hard to relate the deterministic and stochastic models in finite populations. 

When the time to absorption (extinction of the pathogen) is long enough, the endemic equilibrium of the network-based deterministic SIS models has been observed to approximate the expected number of infected individuals over short enough time scales. This is quantified by the quasi-stationary distribution (QSD) of the stochastic models~\cite{Allen2000,Kissetal2017,vanMieghem2011}. Although the unconditioned models numerically approximate the expected prevalence in the QSD when sufficiently above threshold, this is not the case as the system approaches the threshold (and below threshold). In particular, comparison of a model with a genuine steady state with one without such a state is not well-defined. To correct this discrepancy, we constructed approximate models which are formally related to the stochastic dynamics via the QSD. The resulting models capture properties of the QSD at all levels of epidemic severity.

Our first approach assumed that the states of neighbouring nodes are independent, at both node-level and population-level. Although this assumption is not particularly realistic, on the complete network this provides a good approximation to the expected number of infected individuals in the QSD (Figure~\ref{fig:Indiv}). However, for more structured networks the accuracy decreased. Little accuracy was lost when computing the expected number of infected individuals using the population-level model compared to the node-level model, with a significant reduction in computational cost. To improve accuracy, we developed node-level and population-level methods based on assuming independence at the level of pairs, which performed well on all networks tested (Figure~\ref{fig:Pair}). Again, little accuracy was lost in the population-level model. With the significant reduction in computational cost, the population-level models are therefore superior to the node-level models for capturing the expected prevalence in the QSD. However, one advantage of the node-level models is the insight these can give into the dynamics of individual nodes in the population, which the population-level models lose.

With the standard unconditioned approximation methods, it is not inherently clear what the models are capturing, since the stochastic model does not exhibit a stable steady-state. By developing conditioned approximation models that capture the quasi-stationary distribution of the stochastic model, we have presented an approximation framework that is directly related to the underlying stochastic process. Sufficiently above the epidemic threshold, the unconditioned standard models coincide with the conditioned QSD models, demonstrating, as expected, that the standard models approximate the QSD when above threshold. Through directly approximating the QSD, the conditioned models are consistent in approximating the QSD for all parameter values. This consistency makes them a more robust method for capturing quasi-stationary behaviour of stochastic epidemic models.

This paper has focussed on the theoretical insights this model grants, and we have shown that the models can be reasonably accurate on a variety of networks. In particular, we show that the pair-based model can perform well on a square-lattice, which is expected to be one of the worst cases for moment-closure approximations. The accuracy and deterministic nature of the models makes them more amenable to analysing how different network structures can alter the statistics of the QSD than the use of stochastic simulation. This is valuable for characterising the likelihood and severity of the epidemic, for example through the invasion probability~\cite{Wilkinson2013}, which can be calculated directly from the node-level models proposed, and the expected prevalence, which we presented in the numerical results. The potential future applications of this work include applying the methods to investigate how network structure, such as the degree variance, affects the properties of the QSD, as well as extending the model to other epidemic and population dynamics models.

\begin{appendices}

\section{Node-level individual-based QSD model}
\renewcommand{\thefigure}{A\arabic{figure}}
\setcounter{figure}{0} 
\renewcommand{\theequation}{A\arabic{equation}}
\setcounter{equation}{0} 
\subsection{Derivation of node-level conditional distribution equation}
\label{App:nodelevel}
The rate of change in the probability that node $i$ is infected in the QSD is given by the sum of the rates of change in the full system state probabilities for which node $i$ is infected. That is, we have
\begin{eqnarray}
\frac{\mathrm{d}}{\mathrm{d}t}\left(\rho(\Sigma_i(t)=I)\right)&=&\sum_{\alpha:\sigma_{\alpha i}=I}\frac{\mbox{d} \rho_\alpha}{\mbox{d} t} \nonumber \\
&=& \frac{\sum_{\alpha:\sigma_{\alpha i}=I}(QP)_\alpha}{1- P_1} + \frac{(QP)_1}{(1-P_1)^2}\sum\limits_{\alpha:\sigma_{\alpha i}=I}P_\alpha , 
\label{eqn:NodeCond}
\end{eqnarray}
where the terms are defined in Section~\ref{sec:QSD}.
The numerator of the first term on the second line corresponds to the rate of change in the probability that node $i$ is infected, which is given by $\langle \dot{I_i} \rangle $ in Equation~\eqref{eqn:StandardIndiv} in Appendix~\ref{sec:Standard}. The summation in the second term corresponds to the probability that node $i$ is infected, $\langle I_i \rangle$. Therefore, we can write
\begin{eqnarray}
\frac{\mathrm{d}}{\mathrm{d}t}\left(\rho(\Sigma_i(t)=I)\right)= \frac{\langle \dot{I_i}\rangle}{1- P_1} + \frac{(QP)_1}{(1-P_1)^2}\langle I_i \rangle.\nonumber
\end{eqnarray}
Here $(QP)_1$ is the rate at which the system enters the absorbing state. The system can only reach the absorbing state from a state with a single infected individual, in node $j$ for example, which transitions to the all susceptible state at rate $\gamma_j$. Therefore $(QP)_1=\sum\limits_{j}\gamma_j\langle I_j S \rangle$, where we use $\langle I_j S \rangle$ to denote the probability that node $j$ is infected and all other nodes are susceptible. Using this along with Equation~\eqref{eqn:StandardIndiv}, we obtain
\begin{equation}
\frac{\mathrm{d}}{\mathrm{d}t}\left(\rho(\Sigma_i(t)=I)\right)= \frac{\sum_{j}{T_{ij}} \langle S_iI_j \rangle-\gamma_i \langle I_i \rangle}{1- P_1} + \frac{ \langle I_i \rangle }{(1-P_1)^2}\sum\limits_{j}\gamma_j\langle I_j S \rangle.
\label{eqn:IndivCondApp}
\end{equation}
\subsection{Proof that the individual-based node-level QSD model is invariant on $[0,1]^N$.}
\label{App:invariant}
\begin{proof}
To prove that the model in Equation~\eqref{eqn:Rob4} is invariant we use the method from~\cite{Lajmanovich1976}. Along the boundaries to the set we are interested in, we either have $\langle Y_i \rangle=0$ and $\langle X_i \rangle =1$ or $\langle Y_i \rangle=1$ and $\langle X_i \rangle =0$. To show the system is invariant, we need to show that along these boundaries the trajectories do not point away from this set. 

First consider $\langle Y_i \rangle=0$. At this boundary, we have
\begin{equation}
\langle \dot{Y_i} \rangle = \sum\limits_j T_{ij}\langle Y_j \rangle.
\end{equation}
If $\langle Y_j \rangle \in [0,1]$, this cannot be negative, and therefore at $\langle Y_i \rangle =0$ the trajectory in the $i$ direction cannot leave the set $[0,1]^N$. Now consider $\langle Y_i\rangle=1$. We have
\begin{equation}
\langle \dot{Y_i} \rangle = -\gamma_i + \gamma_i\prod\limits_{k \neq i}\langle X_k\rangle.
\end{equation}
The product in this equation is in $[0,1]$ if $\langle X_k \rangle \in [0,1]$ for all $k$. Therefore, this equation can never be positive, so along this boundary the trajectory cannot leave the set $[0,1]$. Therefore, this model is invariant on $[0,1]^N$. \end{proof}

\subsection{Proof of Theorem 1}
\label{App:existence}
\begin{proof}
Consider the node-level individual-based model (Equation~\eqref{eqn:Rob4}) on a $k$-regular network with homogeneous transmission and recovery. If we start with a fully infected population, $\langle Y_i \rangle$ will be equal for all $i$ at every time point. Therefore we can denote $\langle I_i \rangle = a$ for all $i \in \mathcal{V}$. We can write the rate of change in the node probabilities as
\begin{equation}
\dot{a}=-\gamma a + \tau ka(1-a) + a\frac{\gamma Na(1-a)^{N-1}}{1-(1-a)^N}.
\label{eqn:app1}
\end{equation}
In the steady state $\dot{a}=0$. If we rule out $a=0$, since Equation~\eqref{eqn:app1} is undefined for $a=0$, then we obtain
\begin{equation}
(1-a)k\left(\frac{\tau}{\gamma}+\frac{N}{k}\frac{a(1-a)^{N-2}}{1-(1-a)^N}\right)=1.
\end{equation}
We are therefore interested in solutions to $f(a)=0$ with $a \in [0,1]$, where 
\begin{equation}
f(a)=(1-a)k\left(\frac{\tau}{\gamma}+\frac{N}{k}\frac{a(1-a)^{N-2}}{1-(1-a)^N}\right)-1.
\end{equation}
To see if a solution exists within this interval we check the signs at the end points. 

At $a=1$
\[f(1)=-1<0\]
the function is negative.

As $a$ goes to zero
\[\lim\limits_{a \to 0}f(a)=k\frac{\tau}{\gamma}-1 +\lim\limits_{a \to 0}N\frac{a(1-a)^{N-1}}{1-(1-a)^N}.\]
\[\lim\limits_{a \to 0}N\frac{a(1-a)^{N-1}}{1-(1-a)^N}=\lim\limits_{a \to 0}N\frac{(1-a)^{N-1}+(N-1)a(1-a)^{N-2}}{N(1-a)^{N-1}}=1.\]
\[\implies \lim\limits_{a \to 0}f(a) = k\frac{\tau}{\gamma}> 0 \text{ if } \frac{\tau}{\gamma}>0.\]
Therefore as long as the transmission rate $\tau$ is greater than zero there exists a solution to $f(a)=0$ in the open interval $(0,1)$, since $f(a)$ is non-singular on $(0,1)$. 

We now need to show that our approximation to the expected number of infected individuals in the QSD is bounded below by one. This proof holds for all networks provided a solution exists satisfying $\langle Y_i \rangle \in (0,1)$ for all $i$, which we have proven for $k$-regular networks. Consider the node-level individual-based model; i.e.
\begin{equation}
\langle \dot{Y_i}\rangle=-\gamma_i\langle Y_i\rangle + \sum\limits_j T_{ij}\langle X_i\rangle \langle Y_j\rangle +\frac{\langle Y_i \rangle}{1-\prod\limits_k\langle X_k\rangle} \sum\limits_j\gamma_j\langle Y_j\rangle \prod\limits_{k\neq j} \langle X_k\rangle
\label{eqn:AppNodeLevel}
\end{equation}
To approximate the QSD we calculate $\langle Y_i^*\rangle/(1-\prod\limits_k \langle X_k^* \rangle)$, where $\langle Y^*\rangle$ and $\langle X^*\rangle$ are steady-state solutions to~\eqref{eqn:AppNodeLevel}.

Let S be the sum of N independent Bernoulli random variables with success probabilities given by the vector $\langle Y_i^* \rangle$ for  $i \in \{1,2,….,N\}$, which is a feasible solution of Equation~\eqref{eqn:AppNodeLevel}. It is straightforward then that $\mathbb{E}[S]=\sum_i\langle Y_i^*\rangle$, and we can write
\begin{align}
\sum_i \langle Y_i^*\rangle =& \mathbb{E}[S] = \sum_{x=1}^{x=N} \mathrm{P}(S=x)x \geq \sum_{x=1}^{x=N} \mathrm{P}(S=x) = \mathrm{P}(S \geq 1) \\
=& 1- \prod_j(1-\langle Y_j^*\rangle)
\end{align}
So when we approximate the expected number infected in the QSD as
\begin{equation}
\frac{\sum_i \langle Y_i^*\rangle}{1-\prod_j(1- \langle Y_j^*\rangle)}
\end{equation}
this cannot be less than 1. Therefore, provided a non-zero solution exists to Equation~\eqref{eqn:AppNodeLevel}, the approximation to the expected number of infected individuals in the QSD is not less than 1. \end{proof}

\section{Standard approximate models}
\label{sec:Standard}
\renewcommand{\thefigure}{B\arabic{figure}}
\setcounter{figure}{0} 
\renewcommand{\theequation}{B\arabic{equation}}
\setcounter{equation}{0} 
Due to the prohibitive computational cost of solving the master equation (Equation~\eqref{eqn:ME}), approximation methods are useful. In this section, we give an overview of the heterogeneous mean-field and pair-approximation methods, which can be interpreted as approximating the expected behaviour of the stochastic model. For detailed derivations and analysis of these models see~\cite{Kissetal2017}.

Under the heterogeneous mean-field model, we assume that: all individuals with the same degree can be treated identically, the status of neighbouring individuals are independent, $\gamma_i=\gamma$ for all $i \in \mathcal{V}$, and $T_{ij}=\tau$ for all $i,j \in \mathcal{V}$ with $T_{ij}>0$ or $T_{ji}>0$ (the network is assumed undirected for simplicity). The rate of change in the expected number of susceptible and infected individuals, stratified by the degree of the individual, is then approximated by~\cite{Kissetal2017}
\begin{align}
\dot{[S_k]} &\approx -\tau \sum\limits_{l \in \mathcal{M}}|C_{k,l}|\frac{[S_k]}{|C_k|}\frac{[I_l]}{|C_l|}+\gamma[I_k] \nonumber\\
\dot{[I_k]} &\approx \tau \sum\limits_{l \in \mathcal{M}}|C_{k,l}|\frac{[S_k]}{|C_k|}\frac{[I_l]}{|C_l|}-\gamma[I_k],
\end{align}
where $[S_k]$ is the expected number of susceptible individuals of degree $k$ at time $t$, $|C_k|$ is the number of degree $k$ nodes, $|C_{k,l}|$ is the number of pairs involving a degree $k$ node and a degree $l$ node, and $\mathcal{M}$ is the set of unique degrees on the network. Above, and throughout, we use `dot' notation for derivatives with respect to time. Whilst the assumption of neighbouring individuals being independent is unrealistic, the resulting model has low computational cost, and hence it is popular to study.

Instead of assuming statistical independence between individuals, models have been derived by writing down exact equations for the expected number of individuals and pairs:
\begin{align}
[\dot{S_k}]=&\gamma[I_k]-\sum\limits_{l\in\mathcal{M}}\tau[S_kI_l] \nonumber\\
[\dot{I_k}]=&-\gamma[I_k]+\sum\limits_{l\in\mathcal{M}}\tau[S_kI_l] \nonumber\\
[\dot{S_kI_l}]=& \gamma([I_kI_l]-[S_kI_l])+\tau(\sum\limits_{m\in \mathcal{M}}[S_kS_lI_m]-\sum\limits_{m\in \mathcal{M}}[I_mS_kI_l]-[S_kI_l])\nonumber\\
[\dot{S_kS_l}]=& \gamma([S_kI_l]+[I_kS_l])-\tau(\sum\limits_{m\in \mathcal{M}}[S_kS_lI_m]+\sum\limits_{m\in \mathcal{M}}[I_mS_kS_l])\nonumber\\
[\dot{I_kI_l}]=&\tau([S_kI_l]+[I_kS_l]-2\gamma[I_kI_l]+\tau(\sum\limits_{m\in \mathcal{M}}[I_mS_kI_l]+\sum\limits_{m\in \mathcal{M}}[I_kS_lI_m]),
\label{eqn:PairPopStandard}
\end{align}
where $[A_kB_l]$ is the expected number of pairs at time $t$, between degree $k$ and $l$ individuals in states $A$ and $B$ respectively, and $[A_kB_lC_h]$ is the expected number of triples at time $t$, between degree $k$, $l$ and $h$ individuals, in states $A$, $B$ and $C$ respectively. 

Solving this system exactly involves deriving a full hierarchy of equations describing triples and quads and so on~\cite{Eames2002}, and therefore we wish to approximate this system by closing the hierarchy early. This can be done by expressing triples as some function of pairs and individuals. To approximate the triples, we analyse the number of edges starting from a susceptible node, following~\cite{Eames2002,Kissetal2017}. The total number of $SA$ edges (for $A\in \{S,I\}$) from a degree $k$ node to a degree $l$ node are $[S_kA_l]$. Since we have $[S_k]$ susceptible degree $k$ nodes, we have approximately $[S_kA_l]/(k[S_k])$ edges leading from a given susceptible degree $k$ node to a given degree $l$ node in state $A$. Therefore, for a chosen susceptible degree $k$ node the probability that two neighbours, with degree $l$ and $m$, are in states $A$ and $B$ is given by $[A_lS_k][S_kB_m]/k^2[S_k]^2$. We have $k(k-1)$ choices of the two neighbours, and $[S_k]$ choices of the susceptible node, and therefore we can approximate the expected number of triples $[A_lS_kB_m]$ as 
\begin{equation}
[A_lS_kI_m] \approx \frac{k-1}{k}\frac{[A_lS_k][S_kI_m]}{[S_k]}.
\label{eqn:triple1}
\end{equation}
This approximation makes the homogeneity assumption that the neighbours of susceptible degree $k$ nodes are interchangeable and the states of pairs are independent. Using this expression, the system of equations~\eqref{eqn:PairPopStandard} is closed at the level of pair terms, which allows the system to be solved with reasonably low computational cost. 

These two models act at the population level, since they describe how the expected number of individuals with certain traits change. Following the motivation behind these models, node-level models have been developed that describe how the probability of individual nodes being infected change with time. Such models have been referred to as individual-based models~\cite{Sharkey2011,Sharkey2015}, node-level models~\cite{Overton2019}, propagation models~\cite{Kissetal2017} or quenched-mean field~\cite{Ferreiraetal2012,Mata2013}. The advantage of such models over the population-level models is that we do not need to make any homogeneity assumptions about the underlying populations, and therefore properties such as clustering, directed edges and degree heterogeneity are naturally captured. The downside however is that the computational cost scales with at least the number of nodes.

Under Markovian network-based SIS, the dynamics of individual nodes are given by~\cite{Sharkey2011} 
\begin{align} \nonumber
\dot{\langle S_i \rangle}&=-\sum_{j}{T_{ij}} \langle S_iI_j \rangle+\gamma_i \langle I_i \rangle ,\\
\dot{\langle I_i \rangle}&=\sum_{j}{T_{ij}} \langle S_iI_j \rangle-\gamma_i \langle I_i \rangle ,
\label{eqn:StandardIndiv}
\end{align}
where $\langle A_i \rangle$ represents the probability $P(\Sigma_i(t)=A)$ with $A \in \{S,I\}$, and $\langle A_i B_j \rangle$ represents the probability $P(\Sigma_i(t)=A, \Sigma_j(t)=B)$ with $A,B \in \{S,I\}$. 

This equation exactly describes the rate of change for individual nodes in terms of pairs. Pairs of nodes are exactly described by
\begin{align}\label{eqn:StandardPair} \nonumber
\dot{\langle S_iI_j \rangle}=&\sum_{k}{T_{jk}}\langle S_iS_jI_k \rangle-\sum_{k}{T_{ik}}\langle I_kS_iI_j \rangle \\ \nonumber
& -(T_{ij}+\gamma_j )\langle S_iI_j \rangle + \gamma_i \langle I_iI_j \rangle ,\\ \nonumber
\dot{\langle S_iS_j \rangle}=& -\sum_{k}{T_{jk}} \langle S_iS_jI_k \rangle -\sum_{k}{T_{ik}} \langle I_kS_iS_j \rangle, \\ \nonumber
\dot{\langle I_iI_j \rangle}=&\sum_{k}{T_{jk}}\langle I_iS_jI_k \rangle +\sum_{k}{T_{ik}}\langle I_kS_iI_j \rangle -(\gamma_i +\gamma_j )\langle I_iI_j \rangle \\
& + T_{ij}\langle S_iI_j \rangle + 
T_{ji} \langle I_iS_j \rangle, \\ \nonumber
\end{align}
where $\langle A_i B_j C_k \rangle$ represents the probability $P(\Sigma_i(t)=A, \Sigma_j(t)=B, \Sigma_k(t)=C)$ with $A,B,C \in \{S,I\}$. To solve this requires a hierarchy of equations up to full system size. Following similar logic to the population-level equations, this system can be approximated by making assumptions of statistical independence. Assuming that the states of individuals are independent, $\langle S_iI_j\rangle \approx \langle S_i \rangle \langle I_j \rangle$, we can close the hierarchy at the level of individuals.
Alternatively, we can assume independence at the level of pairs. The natural assumption of statistical independence to apply to pairs is that, given three nodes in a line, if the state of the central node is known then the state of the outer two nodes are independent. For all triples in the system above, the central node in the configuration is always the centre node of a line between the two outer nodes. Therefore, if we consider the triple $\langle A_i B_j C_k \rangle$, this can be approximated as a function of lower order terms by using conditional probabilities and assuming statistical independence. By the definition of conditional probabilities, we obtain
\begin{equation}\langle A_i B_j C_k \rangle= {\langle A_iC_k|B_j \rangle}{\langle B_j \rangle}.\nonumber\end{equation}
Assuming that the states of nodes $i$ and $k$ are independent given the state of node $j$, this becomes
\begin{equation}{\langle A_iB_jC_k \rangle} \approx {\langle A_i|B_j\rangle \langle C_k|B_j \rangle}{\langle B_j \rangle}= \frac{\langle A_iB_j\rangle \langle B_jC_k \rangle}{\langle B_j \rangle},\label{eqn:closure}\end{equation}
which closes the hierarchy at the level of pairs. Other methods to approximate triples in terms of pairs and individuals have been proposed~\cite{Keeling1999,Rogers2011,Sharkey2011}, however we do not consider them in this paper. 

The population-level methods described above can be derived rigorously from the node-level methods~\cite{Sharkey2011}. In the exact case, we have
\begin{equation}
[A_k]=\sum\limits_{j:k_j=k}\langle A_j\rangle
\end{equation}
and
\begin{equation}
[A_kB_l]=\sum\limits_{i:k_i=k}\sum\limits_{j:k_j=l}\langle A_iB_j\rangle
\end{equation}
where $A,B \in \{S,I\}$ and $k_i$ is the degree of node $i$. Using this, the rate of change for the population-level terms can be derived. From this, we can also approximate the node-level quantities as
\begin{align}
\langle A_i \rangle \approx \frac{[A_{k_i}]}{|C_{k_i}|},
\label{eqn:PopIndivRelation}
\end{align}
and
\begin{equation}
\langle A_iB_j \rangle \approx \frac{[A_{k_i}B_{k_j}]}{|C_{k_i,k_j}|}.
\label{eqn:PopPairRelation}
\end{equation}
The models described here exhibit an epidemic threshold, above which the pathogen persists and below which the pathogen dies out (illustrated in Figure~\ref{fig:Standard} for the node-level pair-based model). For the population-level models and individual-based node-level model, above these thresholds a unique, globally stable steady-state exists~\cite{Keeling1999,Keeling2005,Kissetal2017,Lajmanovich1976,vanMieghem2011}. For the node-level pair-based model, the disease-free solution has been shown to become unstable as the transmission rate increases~\cite{Mata2013}, at which point we have shown that an endemic steady-state solution exists (Appendix~\ref{App:standard}). Numerically, this endemic equilibrium appears to be unique and globally attracting, similar to the endemic solutions in the other models.

\begin{figure}[H]
\includegraphics[width=1\textwidth,width=1\textwidth]{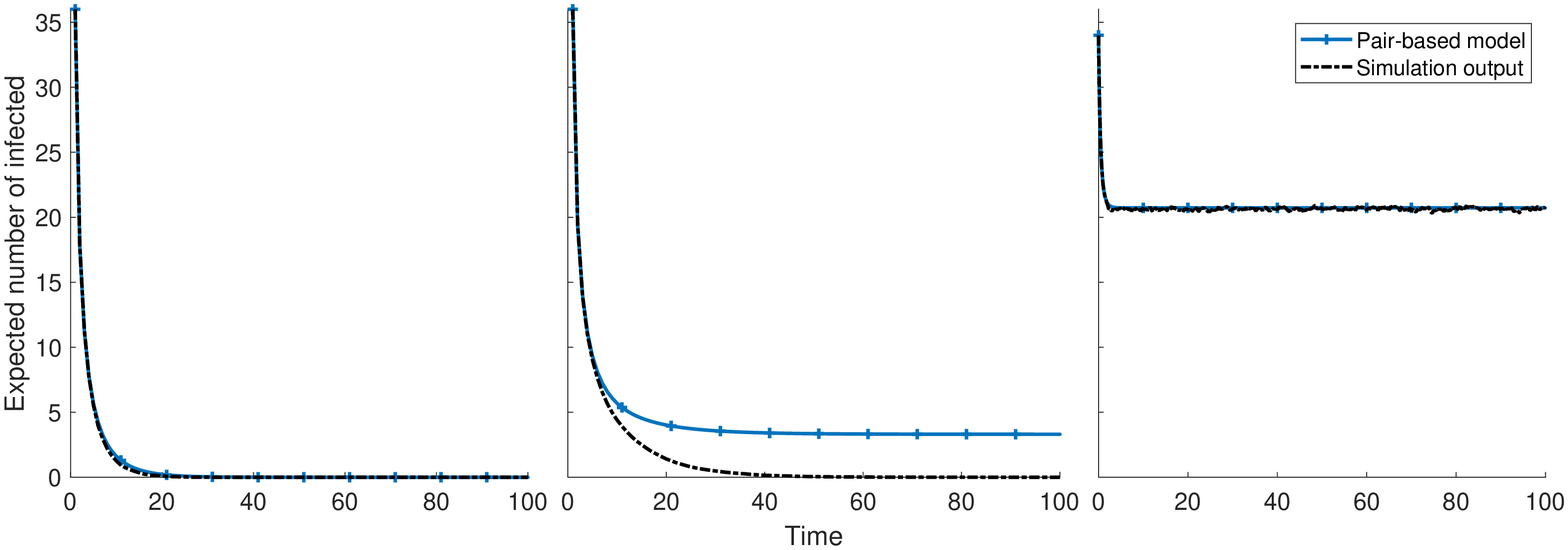}
\caption{Comparing the standard pair-based model (equations~\eqref{eqn:StandardIndiv} and~\eqref{eqn:StandardPair} with the closure from Equation~\eqref{eqn:closure}) with closures with the output of stochastic simulations on Zachary's karate club network. We plot the expected number of infected individuals against time for each of the methods. As the figures move from left to right the transmission rate increases. In the right-most figure, steady-like behaviour is observed in the stochastic model, since the expected time to extinction is very long.}
\label{fig:Standard}
\end{figure}

When comparing these models to the underlying stochastic process (e.g. Figure~\ref{fig:Standard}), below the epidemic threshold the models accurately capture the expected number of infected individuals in the stochastic process. However, as the transmission rate increases (or recovery rate decreases), we pass the epidemic threshold, and observe an endemic equilibrium that does not correspond to the stochastic process. Eventually, when the parameters are sufficiently above the epidemic threshold, the endemic steady-state solutions of these models can approximate the behaviour of the stochastic model for a long time, since the time to extinction of the pathogen is very long. Here, the stochastic process behaves similarly to the quasi-stationary distribution of the model; i.e. the expected long-term behaviour if extinction has not occurred.

\section{Proof of existence of an endemic steady-state for the standard pair-based model}
\label{App:standard}
\renewcommand{\thefigure}{C\arabic{figure}}
\setcounter{figure}{0} 
\renewcommand{\theequation}{C\arabic{equation}}
\setcounter{equation}{0} 
\begin{proof}
In~\cite{Lajmanovich1976}, a theorem is proven regarding the existence of stable endemic solutions for ordinary differential equation epidemic models. Here we demonstrate that the standard pair-based SIS model (equations~\eqref{eqn:StandardIndiv} and~\eqref{eqn:StandardPair} with the closure from Equation~\eqref{eqn:closure}~\cite{Mata2013}) satisfies the requirements for this proof, and therefore has a stable endemic steady-state.

Consider an ODE of the form
\begin{equation}
\frac{\mathrm{d}y}{\mathrm{d}t}=Ay+N(y).
\label{eqn:proof}
\end{equation}
If the following statements hold, then there exists a threshold above which an endemic steady-state exists.
\begin{enumerate}
\item{A compact convex set $C$ on the domain of $N$ is positively invariant, with $y=0 \in C$.}
\item{$\lim\limits_{y \to 0}||N(y)||/||y||=0$}
\item{There exists $r>0$ and a real eigenvector $w$ or $A^T$ such that $(w \cdot y)\geq r||y|| \ \ \forall y \in C$}
\item{$(w\cdot N(y)) \leq 0 \ \ \forall y \in C$}
\item{$y=0$ is the largest positively invariant set contained in $H=\{y \in C |(w\cdot N(y))=0\}$}
\end{enumerate}
The first step is to write the pair-based model in the form \eqref{eqn:proof}. The pair-based model is given by
\begin{align}
\langle \dot{I_i} \rangle=& \sum\limits_j^NT_{ij}\langle S_iI_j\rangle - \gamma\langle I_i \rangle \\
\langle \dot{S_iI_j} \rangle =& \sum\limits_{k \neq i}^NT_{jk}\frac{\langle S_iS_j\rangle\langle S_jI_k\rangle}{\langle S_j \rangle}-\sum\limits_{k \neq j}^NT_{ik}\frac{\langle I_kS_i\rangle\langle S_iI_j\rangle}{\langle S_i \rangle}- \langle S_iI_j \rangle - \gamma\langle S_iI_j \rangle +\gamma\langle I_iI_j \rangle,
\end{align}
where $\langle S_i \rangle = 1- \langle I_i \rangle$, $\langle I_iI_j \rangle = \langle I_j \rangle - \langle S_iI_j \rangle$ and $\langle S_iS_j \rangle= \langle S_i \rangle - \langle S_iI_j \rangle$.

This can be rewritten as
\begin{align}
\langle \dot{I_i} \rangle=&\sum\limits_j^NT_{ij}\langle S_iI_j\rangle - \gamma\langle I_i \rangle \\
\langle \dot{S_iI_j} \rangle =& - (T_{ij}+2\gamma)\langle S_iI_j \rangle+\gamma\langle I_j \rangle +\sum\limits_{k \neq i}^NT_{jk}\langle S_jI_k\rangle\\
&-\sum\limits_{k \neq i}^NT_{jk}\frac{\langle I_iS_j\rangle\langle S_jI_k\rangle}{\langle S_j \rangle}-\sum\limits_{k \neq j}^NT_{ik}\frac{\langle I_kS_i\rangle\langle S_iI_j\rangle}{\langle S_i \rangle}.
\end{align}
Defining $y_i=\langle I_i \rangle$ for $1\leq i\leq N$ and $y_i=\langle S_1I_{i-N}\rangle$ for $N+1 \leq i \leq 2N$, $y_i=\langle S_2I_{i-2N} \rangle$ for $2N+1 \leq i \leq 3N$, and so on, we can write the pair-based model in the form of Equation~\eqref{eqn:proof}. Compiling the linear terms into the matrix $A$, we see that $A$ is only negative on the diagonal. The remaining non-linear terms define the function $N(y)$, which only assigns negative values to each input. Now it is required to check if the properties hold. 

Property (1.) holds because the system is invariant on the set $C=\{0 \leq \langle I_i \rangle \leq 1; 0 \leq \langle S_iI_j \rangle \leq 1\}$. Property (2.) holds because as $y \to 0$ the denominator of all terms, $1-\langle I_i \rangle$, goes to one, and the numerator is of the form $y_iy_j$, which goes to zero faster than $y_i$ and $y_j$. Property (3.) holds because $A$ is irreducible since all the equations are coupled. Since $A$ is only negative on the diagonal, by the Perron-Frobenius theorem, $A^T$ must have an eigenvector $w$ such that $w_i>0$ for all $i$. Property (4.) holds because the function $N(y)$ is negative, so $(w \cdot N(y))\leq 0$, since $w_i>0$ for all $i$. We now need to test property (5.). 

\noindent\textbf{Property (5.)}
If $ y \in H$ then $(w \cdot N(y))=0$. This implies that
\begin{equation}
w_i\sum\limits_{k \neq i}\frac{T_{jk}\langle I_iS_j \rangle\langle S_jI_k\rangle}{1-\langle I_j\rangle}=0
\end{equation}
and
\begin{equation}
w_i\sum\limits_{k \neq j}\frac{T_{ik}\langle I_kS_i \rangle\langle S_iI_j\rangle}{1-\langle I_i\rangle}=0,
\end{equation}
for all pairs $(i,j)$. If we assume that $y \in H$ and $y \neq 0$, then $y_h \neq 0$ for some $h$. If we assume that $y_h = \langle S_iI_j\rangle \neq 0$, then we must have $\langle S_iI_k\rangle=0$, for all $k \in \mathcal{N}_i$. Also, we require $\langle S_jI_k\rangle=0$ for some $k$ or $\langle I_iS_j\rangle=0$. We now need to investigate whether such a state can be invariant. 

Define $S=\{i:y_i =0\}$ and $S'=\{i:y_i \neq 0\}$, both of which are non-empty since $y \neq 0$ and $\langle S_iI_j\rangle=0$ for some pair $(i.j)$ by the above argument. Since $A$ is irreducible, there must exist a pair $k \in S$ and $h \in S'$ such that $dy_k/dt$ depends on $y_h$. 

First assume that $y_h = \langle S_iI_j\rangle$ and $y_k=\langle I_i \rangle$. We have
\begin{equation}
\frac{\mathrm{d}y_k}{\mathrm{d}t}=\sum\limits_{j\neq i}T_{ij}\langle S_iI_j\rangle
\end{equation}
If this state is invariant, then $dy/dt=0$, which implies that $dy_k/dt=0$ for all $k$. This can only be the case if $\langle S_kI_j \rangle=0$ for all $j$. However, we have assumed that $\langle S_iI_j \rangle \neq 0$, so this is not the case and $dy_k/dt \neq 0$. 

Now assume $y_k=\langle S_jI_i\rangle$, which gives
\begin{eqnarray}
\frac{\mathrm{d}y_k}{\mathrm{d}t}&=& \gamma\langle I_i \rangle +\sum\limits_{m \neq j}^NT_{im}\langle S_iI_m\rangle - \sum\limits_{m \neq j}^NT_{im}\frac{\langle I_jS_i\rangle\langle S_iI_m\rangle}{\langle S_i\rangle}.
\end{eqnarray}
Since $\langle I_jS_i \rangle/\langle S_i \rangle \leq 1$, the sum of the last two terms cannot be negative. Therefore, if $dy_k/dt=0$ we have $\langle I_i \rangle=0$. However, as has been shown by assuming $\langle I_i \rangle=0$, this case is not possible. Therefore, $dy_k/dt \neq 0$. Therefore, if $\langle S_iI_j \rangle \neq 0$ for some pair $(i,j)$ and $y \in H$, then this state cannot be invariant. 

Now assume that $y_h = \langle I_i \rangle \in S'$ for some $i$, and consider $y_k = \langle S_jI_i \rangle \in S$. Since $\langle S_xI_y\rangle =0$ for all $(x,y)$, we have
\begin{eqnarray}
\frac{\mathrm{d}y_k}{\mathrm{d}t}&=& \gamma\langle I_i \rangle.
\end{eqnarray}
Since $\langle I_i \rangle \in S'$, $dy_k/dt \neq 0$. Therefore, there are no invariant sets in $H$ such that $y \neq 0$, and $y=0$ is the largest positively invariant set in $H$.

This shows that properties 1-5 are satisfied for this model. Therefore, there exists a stable endemic steady-state above the epidemic threshold of the standard pair-based SIS model. \end{proof}

\section{Population-level individual-based QSD model}
\label{App:IndivPop}
\renewcommand{\thefigure}{D\arabic{figure}}
\setcounter{figure}{0} 
\renewcommand{\theequation}{D\arabic{equation}}
\setcounter{equation}{0} 
The node-level equations give detailed insight into the dynamics of individual nodes in the QSD, however the number of equations scales with $N$. To build approximations with a reduced number of equations, population-level models can be constructed for undirected networks. The rate of change in the expected number of infected individuals with a given degree, under the conditional distribution, is found by taking the sum over the probability that each node with this degree is infected
\begin{eqnarray}
\sum_{i: k_i=k}\sum_{\alpha:\sigma_{\alpha i}=I}\frac{\mbox{d} \rho_\alpha}{\mbox{d} t}&=& \sum_{i: k_i=k}\left(\frac{\sum_{j}{T_{ij}} \langle S_iI_j \rangle-\gamma_i \langle I_i \rangle}{1- P_1} + \frac{ \langle I_i \rangle }{(1-P_1)^2}\sum\limits_{j}\gamma_j\langle I_j S \rangle \right)\nonumber
\end{eqnarray}
The numerator in the first term on the right-hand side is the rate of change that an individual is infected. Taking the sum over all nodes with the same degree, this gives the rate of change in the expected number of infected individuals with that degree, which is given by Equation~\eqref{eqn:PairPopStandard}. Taking the sum of $\langle I_i \rangle$ over all nodes with the same degree gives the expected number of infected nodes with that degree. Therefore, assuming 
\begin{equation}
\langle I_i \rangle \approx \frac{[I_{k_i}]}{|C_{k_i}|} \quad , \quad T_{ij}=\bar{T}_{k_i k_j} \quad, \quad \gamma_i= \gamma_{k_i} \qquad (i \in \mathcal{V} , j \in \mathcal{N}_i),
\end{equation}
where $[A_k]$ is the expected number of individuals with degree $k$ in state $A$ and $\bar{T}_{kl}$ is the rate of transmission from a degree $l$ to a degree $k$ node, we obtain
\begin{align}
\sum_{i: k_i=k}\sum_{\alpha:\sigma_{\alpha i}=I}\frac{\mbox{d} \rho_\alpha}{\mbox{d} t}=&\frac{\sum_{l \in \mathcal{M}}\bar{T}_{kl}[S_kI_l]-\gamma[I_k]}{1-P_1}+\frac{[I_k]}{(1-P_1)^2}\sum_j\gamma\langle I_jS\rangle,
\label{PopIndiv1}
\end{align}
where $[A_kB_l]$ is the expected number of pairs between individuals of degree $k$ and degree $l$, in states $A$ and $B$ respectively, and $k_i$ is the degree of node $i$. Above, and throughout, all expected numbers are with respect to the standard probability measure $P$. Assuming that the states of individuals are independent, \eqref{PopIndiv1} becomes
\begin{eqnarray}
\sum_{i: k_i=k}\sum_{\alpha:\sigma_{\alpha i}=I}\frac{\mbox{d} \rho_\alpha}{\mbox{d} t}&\approx&\frac{\sum_{l \in \mathcal{M}}\bar{T}_{kl}|C_{k,l}|\frac{[S_k]}{|C_k|}\frac{[I_l]}{|C_l|}-\gamma[I_k]}{1- \prod_{j} \langle S_j \rangle}+\frac{[I_k]}{(1- \prod_{j} \langle S_j \rangle)^2}\sum\limits_{j}\gamma\langle I_j \rangle \prod_{k \neq j} \langle S_k \rangle\nonumber
\label{PopIndiv2}
\end{eqnarray}
where $|C_k|$ is the number of degree $k$ nodes in the network and $|C_{k,l}|$ is the number of pairs between degree $k$ and degree $l$ nodes. This equation is not closed, since the final term and the denominators depend on node-level quantities. However, from \eqref{eqn:PopIndivRelation} the node-level quantities can be approximated by assuming $\langle S_j \rangle = [S_k]/|C_k|$, where $k$ is the degree of node $j$. Therefore
\begin{equation}
\prod_{i} \langle S_i\rangle \approx \prod\limits_{l\in\mathcal{M}}\left(\frac{[S_l]}{|C_l|}\right)^{|C_l|},
\end{equation}
and
\begin{equation}
\gamma\langle I_j \rangle \prod\limits_{i \neq j}\langle S_i \rangle \approx \frac{[I_k]}{|C_k|}\left(\frac{[S_k]}{|C_k|}\right)^{|C_k|-1}\prod\limits_{l \in \mathcal{M}:l \neq k}\left(\frac{[S_l]}{|C_l|}\right)^{|C_l|},
\label{eqn:OneInf}
\end{equation} 
where $k$ is the degree of node $j$. Multiplying Equation~\eqref{eqn:OneInf} by the number of degree $k$ nodes, $|C_k|$, we obtain the probability of a single degree $k$ node being infected, which we denote $\tilde{P}(I_k=1)$. Therefore, we obtain
\begin{eqnarray}
\sum_{i: k_i=k}\sum_{\alpha:\sigma_{\alpha i}=I}\frac{\mbox{d} \rho_\alpha}{\mbox{d} t}&\approx&\frac{\sum_{l \in \mathcal{M}}\bar{T}_{kl}|C_{k,l}|\frac{[S_k]}{|C_k|}\frac{[I_l]}{|C_l|}-\gamma[I_k]}{(1-\prod_{l} (\frac{[S_l]}{|C_l|})^{|C_l|})}+\frac{[I_k]}{(1-\prod_{l} (\frac{[S_l]}{|C_l|})^{|C_l|})^2}\sum\limits_{l\in\mathcal{M}}\gamma \tilde{P}(I_l=1).\nonumber
\label{PopIndiv3}
\end{eqnarray}
To find a steady state, we need to find vectors $\langle X \rangle^*$ and $\langle Y \rangle^*$ satisfying
\begin{eqnarray}
0&=&\frac{\sum_{l \in \mathcal{M}}\bar{T}_{kl}|C_{k,l}|\frac{[X_k]^*}{|C_k|}\frac{[Y_l]^*}{|C_l|}-\gamma[Y_k]^*}{(1-\prod_{l} (\frac{[X_l]^*}{|C_l|})^{|C_l|})}+\frac{[Y_k]^*}{(1-\prod_{l} (\frac{[X_l]^*}{|C_l|})^{|C_l|})^2}\sum\limits_{l\in\mathcal{M}}\gamma \tilde{P}(Y_l =1)^*\nonumber
\label{eqn:PopIndiv4}
\end{eqnarray}
from which we can approximate the expected number of infected degree $k$ individuals in the QSD by computing $ [Y_k]^*/(1-\prod_{l} (\frac{[X_l]^*}{|C_l|})^{|C_l|})$. We require $[Y_k]^* \in [0,|C_k|], [X_k]^*=|C_k|-[Y_k]^*$ for all $i$. Such a solution can be found by defining

\begin{align}
\dot{[Y_k]}=&\sum_{l \in \mathcal{M}}\bar{T}_{kl}|C_{k,l}|\frac{[X_k]}{|C_k|}\frac{[Y_l]}{|C_l|}-\gamma[Y_k]+\frac{[Y_k]\sum\limits_{l\in\mathcal{M}}\gamma \tilde{P}(Y_l=1)}{(1-\prod_{l} (\frac{[X_l]}{|C_l|})^{|C_l|})}\nonumber\\
\ [X_k]=&|C_k|-[Y_k] \nonumber \\
\tilde{P}(Y_k=1)=& |C_k|\frac{[Y_k]}{|C_k|}\left(\frac{[X_k]}{|C_k|}\right)^{|C_k|-1}\prod\limits_{l \in \mathcal{M}:l \neq k}\left(\frac{[X_l]}{|C_l|}\right)^{|C_l|},
\label{eqn:PopIndiv5}
\end{align}
and specifying that $[Y_k(0)]\in [0,|C_k|]$ for all $k$ and calculating the steady-state. Any solution will be a valid solution, since Equation~\eqref{eqn:PopIndiv5} is bounded such that $[Y_k]^* \in [0,|C_k|]$ for all $k$ (this can be shown using a method similar to Appendix~\ref{App:invariant}).

\section{Node-level pair-based QSD model}
\label{App:pair}
\renewcommand{\thefigure}{E\arabic{figure}}
\setcounter{figure}{0} 
\renewcommand{\theequation}{E\arabic{equation}}
\setcounter{equation}{0} 
If we do not assume independence at the level of individuals, we need to find equations describing pair probabilities in the conditional distribution. We have
\begin{eqnarray}
 \frac{\mathrm{d}}{\mathrm{d}t}\left(\rho(\Sigma_i(t)=I)\right)= \sum_{\alpha:\sigma_{\alpha i}=I}\frac{\mbox{d} \rho_\alpha}{\mbox{d} t}&=&\frac{\sum_{j}{T_{ij}}\langle S_iI_j \rangle -\gamma_i \langle I_i \rangle}{1-P_1}+ \frac{\langle I_i\rangle}{(1-P_1)^2}\sum\limits_{j}\gamma_j\langle I_jS \rangle,\nonumber\\
\frac{\mathrm{d}}{\mathrm{d}t}\left(\rho(\Sigma_i(t)=S,\Sigma_j=I)\right)= \sum_{\substack{\alpha:\sigma_{\alpha i}=S, \\ \sigma_{\alpha j}=I}}\frac{\mbox{d} \rho_\alpha}{\mbox{d} t}&=&\frac{\sum_{k \in \mathcal{N}_j \setminus i}{T_{jk}}\langle S_iS_jI_k\rangle}{1-P_1}-\frac{\sum_{k \in \mathcal{N}_i \setminus j}{T_{ik}}\langle I_kS_iI_j\rangle}{1-P_1} \nonumber\\ && \quad \quad -\frac{(T_{ij}+\gamma_j )\langle S_iI_j \rangle}{1-P_1} + \frac{\gamma_i \langle I_iI_j\rangle}{1-P_1}+\frac{\langle S_iI_j\rangle}{(1-P_1)^2}\sum\limits_{j}\gamma_j\langle I_jS\rangle ,\nonumber\\
\label{eqn:Pair1}
\end{eqnarray}
where $\langle A_i \rangle$ is shorthand for the marginal probability $P(\Sigma_i(t)=A)$ with $A \in \{S,I\}$, $\langle A_i B_j \rangle$ is shorthand for $P(\Sigma_i(t)=A, \Sigma_j(t)=B)$ with $A,B \in \{S,I\}$, $\langle A_i B_j C_k\rangle$ is shorthand for $P(\Sigma_i(t)=A, \Sigma_j(t)=B,\Sigma_k(t)=C)$ with $A,B,C \in \{S,I\}$, and $\langle I_j S \rangle$ is shorthand for $P(\Sigma_j=I,\Sigma_k=S\mbox{ for all } k \neq j)$. We can simplify this system by assuming statistical independence at the level of pairs.

As described in Appendix~\ref{sec:Standard}, we approximate the triples in terms of pairs and individuals by assuming
\begin{equation}{\langle A_iB_jC_k \rangle} \approx \frac{\langle A_iB_j\rangle \langle B_jC_k \rangle}{\langle B_j \rangle}.\nonumber\end{equation}
Under this assumption, Equation~\eqref{eqn:Pair1} becomes
\begin{eqnarray}
\frac{\mathrm{d}}{\mathrm{d}t}\left(\rho(\Sigma_i(t)=I)\right)&=&\frac{\sum_{j}{T_{ij}}\langle S_iI_j \rangle -\gamma_i \langle I_i \rangle}{1-P_1}+ \frac{\langle I_i\rangle}{(1-P_1)^2}\sum\limits_{j}\gamma_j\langle I_jS \rangle,\nonumber\\
\frac{\mathrm{d}}{\mathrm{d}t}\left(\rho(\Sigma_i(t)=S,\Sigma_j=I)\right)&=&\frac{\sum_{k \in \mathcal{N}_j \setminus i}{T_{jk}}\frac{\langle S_iS_j\rangle\langle S_jI_k\rangle}{\langle S_j \rangle}}{1-P_1}-\frac{\sum_{k \in \mathcal{N}_i \setminus j}{T_{ik}}\frac{\langle I_kS_i\rangle\langle S_iI_j\rangle}{\langle S_i \rangle}}{1-P_1}\nonumber \\ && \quad \quad -\frac{(T_{ij}+\gamma_j )\langle S_iI_j \rangle}{1-P_1} + \frac{\gamma_i \langle I_iI_j\rangle}{1-P_1}+\frac{\langle S_iI_j\rangle}{(1-P_1)^2}\sum\limits_{j}\gamma_j\langle I_jS \rangle ,\nonumber\\
\label{Pair2}
\end{eqnarray}
Note that $\langle S_i\rangle = 1-\langle I_i \rangle$, $\langle I_iI_j\rangle = \langle I_j\rangle - \langle S_iI_j\rangle$ and $\langle S_iS_j\rangle = \langle S_i\rangle -\langle S_iI_j\rangle$. Both $\langle I_jS \rangle$ and the ground state probability, $P_1$, are full system size, and therefore, following~\cite{Frasca2016,Sharkey2015}, a natural pair approximation for these are
\[\langle I_jS\rangle \approx {\langle \widetilde{I_jS}\rangle}=\frac{\prod\limits_{x \in \mathcal{N}_j}\langle I_jS_x\rangle\prod\limits_{y \neq j}\prod\limits_{x \in \mathcal{N}_y:x<y,x\neq j}\langle S_yS_x \rangle }{\prod\limits_{x \neq j}\langle S_x\rangle^{k_x-1}\langle Y_j \rangle^{k_j-1}}\]
and

\begin{equation}
 P_1  \approx \langle \sigma_1 \rangle =\prod_{y}\prod\limits_{x \in \mathcal{N}_y:x<y} \frac{ \langle S_yS_x \rangle }{ \langle S_y \rangle^{n_y-1}}.\nonumber
\label{eqn:PairGround}
\end{equation}
In the QSD, both the pair level and individual level conditional probabilities are in a steady-state, so both equations in Equation~\eqref{eqn:Pair1} are equal to zero. Therefore, to find the approximation to the QSD under the pair level independence assumption, we need to find vectors $\langle X^*\rangle$, $\langle Y^* \rangle$, and matrices $\langle XX^* \rangle$,$\langle XY^* \rangle$, and $\langle YY^* \rangle$ satisfying,
\begin{eqnarray}
0&=&\frac{\sum_{j}{T_{ij}}\langle X_iY_j \rangle^* -\gamma_i \langle Y_i \rangle^*}{1- \langle \sigma_1 \rangle}+ \frac{\langle Y_i\rangle^*}{(1- \langle \sigma_1 \rangle)^2}\sum\limits_{j}\gamma_j\langle \widetilde{Y_jX} \rangle^*,\nonumber\\
 0&=&\frac{\sum_{k \in \mathcal{N}_j \setminus i}{T_{jk}}\frac{\langle X_iX_j\rangle^*\langle X_jY_k\rangle^*}{\langle X_j \rangle^*}}{1- \langle \sigma_1 \rangle}-\frac{\sum_{k \in \mathcal{N}_i \setminus j}{T_{ik}}\frac{\langle Y_kX_i\rangle^*\langle X_iY_j\rangle^*}{\langle X_i \rangle^*}}{1- \langle \sigma_1 \rangle} \nonumber\\ && \quad \quad -\frac{(T_{ij}+\gamma_j )\langle X_iY_j \rangle^*}{1- \langle \sigma_1 \rangle} + \frac{\gamma_i \langle Y_iY_j\rangle^*}{1- \langle \sigma_1 \rangle}+\frac{\langle X_iY_j\rangle^*}{(1- \langle \sigma_1 \rangle)^2}\sum\limits_{j}\gamma_j\langle \widetilde{Y_jX}\rangle^* ,\nonumber\\
\label{Pair3}
\end{eqnarray}
which, once solved, can be used to find the probability that $i$ is infected in the QSD by computing $\langle Y_i \rangle^*/(1-\langle \sigma_1\rangle^*)$. However, we require solutions $\langle Y_i \rangle^*$ and $\langle X_iY_j \rangle^* \in [0,1]$ which satisfy $\langle X_i \rangle^*=1-\langle Y_i \rangle^*$ for all $i$ and $\langle X_iX_j \rangle = \langle X_i \rangle-\langle X_iY_j \rangle$, and $\langle Y_iY_j\rangle=\langle Y_j\rangle- \langle X_iY_j \rangle$ for all $i,j$ in order to be valid solutions to our original problem. 

By calculating the steady-state of the system,
\begin{eqnarray}
\langle\dot{Y_i}\rangle&=&\sum_{j}{T_{ij}}\langle X_iY_j \rangle-\gamma_i \langle Y_i\rangle+ \frac{\langle Y_i\rangle\sum\limits_{j}\gamma_j \langle \widetilde{Y_jX}\rangle}{1- \langle \sigma_1 \rangle},\nonumber\\
\langle\dot{X_iY_j}\rangle&=&\sum_{k \in \mathcal{N}_j \setminus i}{T_{jk}}\frac{\langle X_iX_j\rangle\langle X_jY_k\rangle}{\langle X_j \rangle}-\sum_{k \in \mathcal{N}_i \setminus j}{T_{ik}}\frac{\langle Y_kX_i\rangle\langle X_iY_j\rangle}{\langle X_i \rangle}\nonumber\\
&&-(T_{ij}+\gamma_j )\langle X_iY_j \rangle + \gamma_i \langle Y_iY_j\rangle+\frac{\langle X_iY_j\rangle\sum\limits_{j}\gamma_j\langle \widetilde{Y_jX} \rangle}{1- \langle \sigma_1 \rangle} ,\nonumber\\
\langle X_i \rangle&=&1- \langle Y_i \rangle,\nonumber \\
\langle X_iX_j \rangle &=& \langle X_i \rangle- \langle X_iY_j \rangle, \nonumber \\
\langle Y_iY_j \rangle &=& \langle Y_i \rangle-\langle Y_iX_j \rangle, 
\label{eqn:PairNode5}
\end{eqnarray}
where
\[\langle \widetilde{Y_jX}\rangle = \frac{\prod\limits_{x \in \mathcal{N}_j}\langle Y_jX_x\rangle\prod\limits_{y \neq j}\prod\limits_{x \in \mathcal{N}_y:x<y,x\neq j}\langle X_yX_x \rangle }{\prod\limits_{x \neq j}\langle X_x\rangle^{k_x-1}\langle Y_j \rangle^{k_j-1}}\]
and

\begin{equation}
\langle \sigma_1 \rangle = \prod_{y}\prod\limits_{x \in \mathcal{N}_y:x<y} \frac{ \langle X_yX_x \rangle }{ \langle X_y \rangle^{n_y-1}}.\nonumber
\label{eqn:PairGround}
\end{equation}
we can approximate the probability that $i$ is infected in the QSD by computing $\lim_{t \to \infty} \langle Y_i(t) \rangle^*/(1-\langle \sigma_0(t) \rangle^*)$.

\section{Population-level pair-based QSD model}
\label{App:population}
\renewcommand{\thefigure}{F\arabic{figure}}
\setcounter{figure}{0} 
\renewcommand{\theequation}{F\arabic{equation}}
\setcounter{equation}{0} 
To obtain a population-level pair-based model, we sum over nodes with the same degree (and pairs of nodes with same pair of degrees); i.e.
\begin{eqnarray}
\sum_{i: k_i=k}\sum_{\alpha:\sigma_{\alpha i}=I}\frac{\mbox{d} \rho_\alpha}{\mbox{d} t}&=&\frac{\tau\sum_{l \in \mathcal{M}}[S_kI_l]-\gamma[I_k]}{1-P_1}+\frac{[I_k]}{(1-P_1)^2}\sum_j\gamma\langle I_jS\rangle \nonumber\\
\sum_{\substack{i,j: k_i=k, \\ k_j=l}}\sum_{\substack{\alpha:\sigma_{\alpha i}=S,\\ \sigma_{\alpha j}=I}}\frac{\mbox{d} \rho_\alpha}{\mbox{d} t}&=&\frac{\tau\sum_{m \in \mathcal{M}}[S_kS_lI_m]-\tau\sum_{m \in \mathcal{M}}[I_mS_kI_l]-\tau[S_kI_l]+\gamma[I_kI_l]-\gamma[S_kI_l]}{1-P_1}\nonumber\\
&&+\frac{[S_kI_l]}{(1-P_1)^2}\sum_j\gamma\langle I_jS\rangle, \nonumber\\
\label{eqn:PairPop}
\end{eqnarray}
where $[A_kB_lC_h]$ is the expected number of triples between degree $k$, degree $l$ and degree $h$ individuals in states $A$, $B$ and $C$ respectively.

As described in Appendix~\ref{sec:Standard}, we can express the triple terms as
\begin{equation}
[A_lS_kI_m] \approx \frac{k-1}{k}\frac{[A_lS_k][S_kI_m]}{[S_k]},
\label{eqn:triple1}
\end{equation}
We can set equations~\eqref{eqn:PairPop} to zero and use the approximation~\eqref{eqn:triple1} to find equations describing the QSD. 

A solution to the resulting system can be found by finding an steady-state of
\begin{align}
[\dot{Y_k}]=&-\gamma[Y_k]+\tau\sum_{l\in\mathcal{M}}[X_kY_l] +\frac{[Y_k]\sum\limits_{l\in\mathcal{M}}\gamma \tilde{P}(Y_l=1)}{1- \langle \sigma_1 \rangle}\nonumber\\
{}[\dot{X_kY_l}]=& \tau(\sum_{m\in\mathcal{M}}\frac{l-1}{l}\frac{[X_kX_l][X_lY_m]}{[X_l]}-\sum_{m\in\mathcal{M}}\frac{k-1}{k}\frac{[Y_mX_k][X_kY_l]}{[X_k]}\nonumber\\
&-[X_kY_l])+\gamma([Y_kY_l]-[X_kY_l])+\frac{[X_kY_l]\sum\limits_{l\in\mathcal{M}}\gamma \tilde{P}(Y_l=1)}{1- \langle \sigma_1 \rangle}\nonumber \\
{}[\dot{Y_kY_l}]=&\tau(\sum_{m\in\mathcal{M}}\frac{k-1}{k}\frac{[Y_mX_k][X_kY_l]}{[X_k]}+\sum_{m\in\mathcal{M}}\frac{l-1}{l}\frac{[Y_kX_l][X_lY_m]}{[X_l]})\nonumber\\
&+\tau([X_kY_l]+[Y_kX_l]-2\gamma[Y_kY_l]+\frac{[Y_kY_l]\sum\limits_{l\in\mathcal{M}}\gamma \tilde{P}(Y_l=1)}{1- \langle \sigma_1 \rangle}  \nonumber \\
\ [X_k]=&|C_k|-[Y_k]\nonumber \\
\ [X_kX_l]=&|C_{k,l}|-[Y_kY_l]-[X_kY_l]-[X_lY_k],
\label{eqn:PairPop5}
\end{align}
where $\tilde{P}(Y_l=1)=|C_l|\langle Y_iX \rangle$ for some $i$ with $k_i=l$. Here
\begin{equation}\langle Y_iX \rangle = \frac{\prod\limits_{x,y \neq i}G_{xy}\langle X_xX_y \rangle \prod\limits_{x}G_{ix}\langle Y_iX_x\rangle}{\prod\limits_{x \neq i}\langle X_x\rangle^{k_x-1}\langle Y_i \rangle^{k_i-1}},\end{equation}
which requires node-level terms. We can approximate this by population-level quantities using
\begin{align}
\langle S_i \rangle \approx \frac{[S_{k_i}]}{|C_{k_i}|},
\label{eqn:PopIndivRelation2}
\end{align}
and
\begin{equation}
\langle S_iS_j \rangle \approx \frac{[S_{k_i}S_{k_j}]}{|C_{k_i,k_j}|},
\label{eqn:PopPairRelation2}
\end{equation}
based on the discussion in Appendix~\ref{sec:Standard}. This gives
\begin{equation}
\langle Y_jX \rangle \approx \frac{\prod\limits_{k \neq k_j}\prod\limits_{l \leq k:l\neq k_j}\left(\frac{[X_kX_l]}{|C_{k,l}|}\right)^{|C_{k,l}|}\left(\frac{[Y_{k_j}X_k]}{|C_{k_j,k}|}\right)^{\frac{|C_{k_j,k}|}{|C_{k_j}|}}\left(\frac{[X_{k_j}X_k]}{|C_{k_j,k}|}\right)^{|C_{k_j,k}|-\frac{|C_{k_j,k}|}{|C_{k_j}|}} }{\prod\limits_{k \neq k_j}\left(\frac{[X_k]}{|C_k|}\right)^{|C_k|(k-1)}\left(\frac{[Y_{k_j}]}{|C_{k_j}|}\right)^{(k_j-1)}\left(\frac{[X_{k_j}]}{|C_{k_j}|}\right)^{(|C_{k_j}|-1)(k_j-1)}}.\label{eqn:PopFlow}
\end{equation}
To approximate the ground state recall that in the previous section we have shown that a natural approximation to the ground state probability under the assumption of pair level independence is
\begin{equation}
 \langle \sigma_1 \rangle \approx \prod_{i} \prod_{j<i} \frac{ G_{ij} \langle X_i X_j \rangle }{ \langle X_i \rangle^{n_i-1}}.\nonumber
\end{equation}
Using equations~\eqref{eqn:PopIndivRelation2} and \eqref{eqn:PopPairRelation2} we can approximate this in terms of population level quantities, which yields
\begin{equation}
\langle \sigma_1 \rangle \approx \prod\limits_{k}\prod\limits_{l \leq k} \left(\frac{\left(\frac{[X_kX_l]}{|C_{k,l}|}\right)^{|C_{k,l}|}}{\left(\frac{[X_k]}{|C_k|}\right)^{|C_k|(k-1)}}\right)\label{eqn:PopGround}
\end{equation}
By substituting equations~\eqref{eqn:PopGround} and~\eqref{eqn:PopFlow} into Equation~\eqref{eqn:PairPop5} we obtain a closed system of equations.

\end{appendices}

\section*{Acknowledgements}
\noindent CO and KS acknowledge support from EPSRC grant (EP/N014499/1). The authors would like to thank Ian Smith for use of the ARC Condor high throughput computing system at the University of Liverpool \texttt{http://condor.liv.ac.uk/}, which significantly sped up simulation of the stochastic models. 

\section*{Author contributions}
\noindent CO, KS and RW created the project, performed the analysis and wrote the manuscript. JM created the project and perfomed the analysis. AL created the project.

\section*{Competing interests}
\noindent The authors have no competing interests to declare.

\section*{Data and materials}
\noindent Matlab code for solving the models will be published online with the manuscript. Python code will be added to the Epidemics on Networks package.

\begingroup
\let\itshape\upshape
\bibliographystyle{plain}
\bibliography{refs}
\endgroup
\end{document}